\documentclass[12pt,letterpaper]{JHEP3}

\usepackage{amscd,amsmath,amssymb,amsfonts,xspace,mathrsfs,amsthm}
\usepackage{color}
\let\normalcolor\relax
\usepackage{bbold}
\usepackage{latexsym}
\usepackage{graphicx}
\usepackage{dsfont}
\usepackage{color}
\usepackage{longtable}
\usepackage[bbgreekl]{mathbbol}
\usepackage{bbm}
\usepackage{upgreek}

\DeclareMathAlphabet{\euscr}{U}{eus}{m}{n}

\def\bea{\begin{eqnarray}}
	\def\eea{\end{eqnarray}}
\def\be{\begin{equation}}
	\def\ee{\end{equation}}
\def\ba{\begin{align}}
	\def\ea{\end{align}}
\def\bse{\begin{subequations}}
	\def\ese{\end{subequations}}

 \newtheorem{theorem}{Theorem}
  \newtheorem{proposition}{Proposition}
  \newtheorem{lemma}{Lemma}
  
\def\det{\,{\rm det}\, }

\def\sign{{\rm sgn}}

\def\Ch{{\rm Ch}}

\def\Sym{\,{\rm Sym}\, }
\def\cSym{\,{\rm cSym}\, }

\def\Span{\,{\rm Span}\, }

\DeclareMathOperator{\Erf}{Erf}
\DeclareMathOperator{\Erfc}{Erfc}

\newcommand{\sgn}{{\rm sgn}}
\newcommand{\rank}{\mbox{rank}}

\def\({\left(}
\def\){\right)}
\def\[{\left[}
\def\]{\right]}
\def\<{\left\langle}
\def\>{\right\rangle}
\def\hf{{1\over 2}}

\newcommand{\p}{\partial}

\newcommand{\vth}{\vartheta}

\def\vrh{\varrho}

\newcommand{\eps}{\epsilon}
\newcommand{\veps}{\varepsilon}

\newcommand{\de}{\mathrm{d}}

\newcommand{\I}{\mathrm{i}}

\newcommand{\rmR}{\mathrm{R}}

\newcommand{\cL}{\mathcal{L}}
\newcommand{\cD}{\mathcal{D}}
\newcommand{\cF}{\mathcal{F}}

\newcommand{\cG}{\mathcal{G}}
\newcommand{\cB}{\mathcal{B}}

\newcommand{\cM}{\mathcal{M}}

\newcommand{\cJ}{\mathcal{J}}

\newcommand{\cI}{\mathcal{I}}
\newcommand{\cO}{\mathcal{O}}

\newcommand{\xbbm}{\mathbbm{x}}

\newcommand{\vbbm}{\mathbbm{v}}

\newcommand{\bbbm}{\mathbbm{b}}
\newcommand{\ebbm}{\mathbbm{e}}

\newcommand{\bsf}{{\sf b}}

\newcommand{\qsf}{{\sf q}}

\newcommand{\Msf}{\mathsf{M}}

\newcommand{\Ssf}{\mathsf{S}}
\newcommand{\esf}{\mathsf{e}}

\newcommand{\IT}{\mathds{T}}
\newcommand{\IR}{\mathds{R}}
\newcommand{\IC}{\mathds{C}}
\newcommand{\IZ}{\mathds{Z}}

\newcommand{\IN}{\mathds{N}}

\newcommand{\IF}{\mathds{F}}

\newcommand{\IP}{\mathds{P}}

\def\scR{\mathscr{R}}
\def\Mv{\mathscr{M}}

\def\Bv{\mathscr{B}}
\def\Ev{\mathscr{E}}
\def\Fv{\mathscr{F}}

\def\Zv{\mathscr{Z}}

\def\euM{\euscr{M}}
\def\euF{\euscr{F}}

\newcommand{\bffv}{{\bf v}}

\newcommand{\bfv}{{\boldsymbol v}}

\newcommand{\bfp}{{\boldsymbol p}}
\newcommand{\bfq}{{\boldsymbol q}}

\newcommand{\bfx}{{\boldsymbol x}}
\newcommand{\bfy}{{\boldsymbol y}}

\newcommand{\bfN}{{\boldsymbol N}}

\newcommand{\bftet}{{\boldsymbol \theta}}
\newcommand{\bflam}{{\boldsymbol \lambda}}
\newcommand{\bfmu}{{\boldsymbol \mu}}

\newcommand{\bfLam}{{\boldsymbol \Lambda}}

\newcommand{\bfgam}{{\boldsymbol \gamma}}

\def\tV{\tilde V}

\def\tcB{\tilde\cB}
\def\tcM{\tilde\cM}

\def\tvbbm{\tilde\vbbm}

\def\ba{\bar a}

\def\btau{\bar \tau}

\def\bOm{\overline\Omega}

\def\hq{\hat q}

\def\hM{\hat M}

\def\hPhi{\hat\Phi}

\def\hgam{\hat\gamma}
\def\bfhgam{\hat\bfgam}

\def\hMsf{\hat\Msf}

\def\CY{\mathfrak{Y}}

\def\vert{v}
\def\vth{\vartheta}

\def\hint{h^{\rm (int)}}

\def\Ef{\Ev^{(0)}}
\def\Efrf{\Ev^{(0)}}

\def\Er{\Ev}

\def\Ep{\Ev^{(+)}}
\def\Eprf{\Ev^{(+)}}

\def\rmRirf#1{\rmR^{(#1)\rm ref}}

\def\scRrf{\scR^{\rm ref}}

\def\hr{h^{\rm ref}}

\def\whhr{\widehat h^{\rm ref}}

\def\whh{\widehat h}

\def\whTheta{\widehat\Theta}

\def\wheuF{\widehat \euF}

\def\whFv{\widehat\Fv}

\def\Fp{\cF^{\perp}}

\def\vu{\mathfrak{u}}

\def\vb{\mathfrak{b}}

\def\vm{\mathfrak{m}}

\def\tvm{\tilde\vm}
\def\tvu{\tilde\vu}

\def\tvb{\tilde\vb}

\def\bbp{\mathbb{p}}
\def\bbv{\mathbb{v}}

\newcommand{\q}{\mathrm{q}}

\def\glu{{\rm g}}

\def\nv{v_0}

\def\mI{m_\cI}

\def\hbbgam{\hat\bbgamma}

\title{Remarks on Vafa-Witten theory, \\ or how to sit on a wall
	}

\author{Sergei Alexandrov$^1$ and Aradhita Chattopadhyaya$^2$ 
\\
$^1$ {\it Laboratoire Charles Coulomb (L2C), Universit\'e de Montpellier,
	CNRS, F-34095, Montpellier, France}\\
$^2$ {\it Chennai Mathematical Institute, H1 SIPCOT IT PARK, Siruseri, Kelambakkam, TN 603103, India}
\\
\vspace*{2mm} {\tt e-mail:
	\email{sergey.alexandrov@umontpellier.fr}, \email{aradhitac@cmi.ac.in}
}

\vspace*{-3mm}

}

\abstract{We prove three results about refined $SU(N)$ VW invariants on Hirzebruch surfaces. 
	First, we crucially simplify the modular anomaly equation satisfied by the generating series of these invariants 
	by expressing it in terms of complementary generalized error functions,
	which requires extending these error functions to their singular loci.
	Second, we evaluate the variation of the modular completions of the generating series upon change of polarization.
	Finally, a representation of the generating functions in terms of indefinite theta series 
	allows us to compute VW invariants at the walls of marginal stability,
	and to express them in terms of the invariants on the two sides of the wall and those of lower ranks.
	All the results exhibit a universal structure suggesting their wider applicability.
}

\begin{document}

\section{Introduction}
\label{sec-intro}

Vafa-Witten (VW) theory is a topological field theory characterized by a gauge group 
and defined on any complex surface $S$ \cite{Vafa:1994tf}. It can be solved by localization 
and in favorable circumstances its partition function is expressed in terms of topological invariants of the surface,
known as {\it VW invariants}, which count the Euler numbers of the moduli spaces of instantons on $S$.
Upon introducing {\it refinement}, parametrized by a complex variable $y$, a fugacity conjugate to the angular momentum, 
one gets access to the Betti numbers of these moduli spaces, called {\it refined VW invariants}.

A remarkable fact is that VW invariants manifest {\it modularity}, which appears as a consequence of S-duality 
of the underlying physical theory \cite{Vafa:1994tf}. 
In particular, it implies that the partition function must be a modular form, 
or a Jacobi form in the refined case.
Furthermore, for surfaces with $b_2^+(S)=1$, 
VW invariants, similarly to other topological invariants such as 
Seiberg–Witten or Donaldson invariants (see, e.g., \cite{Manschot:2023rdh}),
experience wall-crossing and their generating functions acquire a {\it modular anomaly}.\footnote{For results about surfaces with $b_2^+(S)>1$ see, e.g., 
	\cite{Vafa:1994tf,Labastida:1999ij,gottsche2017virtual,Laarakker:2018isn,Thomas:2018lvm,Gottsche:2018meg,Gottsche:2021dye}.}
However, the anomaly makes the story even more beautiful because it appears to be a manifestation of {\it mock modularity} 
\cite{Zwegers-thesis,MR2605321,Alexandrov:2025sig}.
For instance, for the gauge group $G=SU(N)$, the generating functions turn out to be (vector valued) depth $N-1$ mock modular forms, 
or mock Jacobi forms in the presence of refinement \cite{Vafa:1994tf,Minahan:1998vr,Manschot:2017xcr,Alexandrov:2019rth}.
The partition function is then given by a non-holomorphic {\it modular completion} of 
the generating series and hence exhibits a {\it holomorphic anomaly}.
As was shown explicitly in \cite{Dabholkar:2020fde} for the simplest case of $SO(3)$ VW theory on $\IP^2$, 
the non-holomorphic anomalous contributions to the path integral 
originate from the $Q$-exact terms due to boundaries of the moduli space, 
in parallel to the holomorphic anomaly in topological string theory \cite{Bershadsky:1993ta}.

The precise form of the modular anomaly has been derived in \cite{Alexandrov:2019rth}, starting from a similar anomaly
for D4-D2-D0 BPS indices, which are physical realizations of rank 0 generalized Donaldson-Thomas (DT) invariants
of Calabi-Yau (CY) threefolds (see \cite{Alexandrov:2025sig} for a review or \cite{Alexandrov:2012au,Alexandrov:2016tnf,Alexandrov:2018lgp}
for original works). The point is that $SU(N)$ VW invariants of a surface $S$ coincide with these BPS indices provided the CY is taken to be 
the canonical bundle ${\rm Tot}(K_S)$ over this surface 
\cite{Minahan:1998vr,Alim:2010cf,Gholampour:2013jfa,gholampour2017localized},
while the rank $N$ is identified with the number of D4-branes wrapped on $S$.
Furthermore, the non-compactness of the CY allows us to introduce refinement, and the identification 
between the BPS indices and the VW invariants continues to hold also at the refined level.

The resulting anomaly is conveniently formulated as an equation relating the modular completion 
$\whh_{N,\mu}$ of the generating function $h_{N,\mu}$ to the generating functions $h_{N_i,\mu_i}$ of lower ranks $N_i<N$
(see \eqref{exp-whhr}).
Combined with some additional restrictions (well-defined unrefined limit, match to known results for small $N$, etc.),
this equation can actually be solved for the generating functions and their completions.
This idea has been realized for $\IP^2$, Hirzebruch and del Pezzo surfaces in \cite{Alexandrov:2020bwg,Alexandrov:2020dyy},
generalizing previous results of 
\cite{Yoshioka:1994,Yoshioka:1995,Bringmann:2010sd,Manschot:2010nc,Manschot:2011dj,Manschot:2011ym,Klemm:2012sx,
	Manschot:2014cca,Manschot:2017xcr,Beaujard:2020sgs}.

In fact, the anomaly equation restricts only the VW invariants evaluated in the {\it canonical chamber}
in the moduli space parametrized by polarization vector $J$, which is defined as the chamber containing the point $J=c_1(S)$.
However, the work \cite{Alexandrov:2020dyy} suggested a natural extension of the above solution to arbitrary $J\in\mbox{\rm Span}(c_1(S),\nv(S))^+$.
The only remaining restriction is that it belongs to the projection of the K\"ahler cone on the two-dimensional 
plane spanned by the first Chern class and a certain null vector $\nv(S)$ (see \eqref{nullv}) 
that plays a crucial role in the construction.
Since for Hirzebruch surfaces $b_2(S)=2$, in this case the restriction is empty.

Thus, at least for Hirzebruch surfaces, which we denote by $\IF_m$, for arbitrary polarization $J$ we have explicit expressions 
for the generating functions of refined VW invariants and for their modular completions encoding the physical partition function,
as explained above. Both of them are expressed in terms of indefinite theta series with kernels constructed 
from the sign functions and, in the case of completions, the {\it generalized error functions} introduced in 
\cite{Alexandrov:2016enp,Nazaroglu:2016lmr} (see appendix \ref{ap-generr} for their definition). 
The same functions appear also in the original anomaly equations 
and thus are a crucial ingredient of the construction.

In this work, we elaborate on these results and analyze the following issues:

\begin{enumerate}
\item 
In \cite{Pioline:2025xgf} the authors made an interesting observation about the modular anomaly 
satisfied by the generating functions of refined D4-D2-D0 BPS indices in the special case of {\it collinear} D4-brane charges.
It is relevant to our story because 
on the non-compact CY ${\rm Tot}(K_S)$, used to obtain VW theory, 
there is a single compact divisor $[S]$ that can be wrapped by D4-brane.
It was shown that in such situation, at least for small D4-brane charges, the modular anomaly is significantly simplified 
being expressed in terms of {\it complementary generalized error functions}.
While the original construction represents this anomaly as a theta series with a kernel given by a sum over 
what are known as {\it Schr\"oder trees} with non-trivial weights assigned to their vertices, after the simplification
the kernel reduces to a single term, up to certain contributions localized on a small class of lattice vectors.
In fact, the new representation directly shows that the modular anomaly is given by an iterated integral 
of Eichler type as in \cite{Manschot:2017xcr}.
Then in \cite[Conjecture 6.1]{Alexandrov:2025sig}, on the basis of this observation,  
a formula for the simplified anomaly at any rank,
including the above mentioned special contributions, was conjectured. 
	
Here we find that the conjecture in \cite{Alexandrov:2025sig} was actually not precise
and prove instead its corrected version (see Theorem \ref{thm-simple-ref}).
As we will see, the main trouble is due to the special contributions which the original observation did not care about.
In fact, they are somewhat ambiguous because in \cite{Alexandrov:2016enp,Nazaroglu:2016lmr}
the complementary generalized error functions were defined only away from their loci of discontinuity, which 
are exactly the loci giving rise to these contributions.
We show that there are two natural ways to extend the definition of the error functions 
to these loci, which give two different expressions for the anomaly.
In particular, in the second expression the troubling contributions get canceled and do not appear at all,
so that the kernel of the theta series representing the anomaly does reduce to a single term 
(see Theorem \ref{thm-simple-ref2})!
	
\item 
As was mentioned above, VW invariants experience wall-crossing and therefore their generating functions
as well as their completions depend on polarization $J$, an element of the K\"ahler cone.
Whereas the wall-crossing behavior of the generating functions is governed by the wall-crossing formulae 
for the topological invariants \cite{ks,Joyce:2008pc}, it is interesting to understand the dependence on $J$ of their completions.
These functions are actually smooth and do not see any walls. Nevertheless, their dependence on $J$ is non-trivial
and we compute the difference of the completion $\whh_{N,\mu}$ evaluated at two different polarizations, $J$ and $J'$  
(see Theorem \ref{th-compl}).
The result is expressed in terms of the completions $\whh_{N_i,\mu_i}$ with $N_i<N$ evaluated at $J$ and is manifestly modular.
Furthermore, it takes a universal form independent of the surface data suggesting that it might hold also in a more general setting.

\item 
The wall-crossing formulae, at least in principle, allow us to compute 
(refined) VW invariants almost everywhere in the moduli space if they are known
at one point. The only exception is the walls themselves. We are not aware of any equations that give access to the invariants
evaluated at the walls.
This fact creates additional complications for the evaluation of VW invariants on different surfaces. 
Indeed, since any smooth rational surface can be obtained 
from $\IF_m$ by a finite sequence of blow-ups or blow-downs, 
VW invariants on such surfaces could be obtained by applying the blow-up formulae \cite{Yoshioka:1996,0961.14022,Li:1998nv}
to the generating functions on $\IF_m$. However, this requires evaluating them at the walls, which appears to be problematic.
In the usual approach (see, e.g., \cite{Manschot:2014cca}) 
this problem is avoided by passing through stack invariants \cite{joyce2008configurations}, 
which are polynomial combinations of rational VW invariants having simpler transformation properties under wall-crossing
(but more complicated integrality properties) so that they can be found directly at the walls.
Although conceptually straightforward, computationally this detour is certainly unwelcome.

Remarkably, in \cite{Alexandrov:2020dyy} it was argued that the expressions for $h_{N,\mu}$ found there 
in terms of indefinite theta series are well-defined everywhere including the walls!
In particular, they can be used in the blow-up relations correctly reproducing well-known results for $\IP^2$.
This suggests that the theta series correctly capture the VW invariants on the walls of marginal stability and 
raises questions regarding their properties. One can check that, in general, 
they are {\it not} integers.\footnote{The coefficients of the generating functions $h_{N,\mu}$
	that possess nice (mock) modular properties are in fact {\it rational} VW invariants. 
	But away from the walls they can always be converted to integer invariants by means of the relation \eqref{inv-ratref}.
	The integrality of \eqref{inv-ratref} represents a non-trivial consistency condition on the generating functions.} 
For $N=2$ they turn out to be simply the arithmetic average of the VW invariants on the two sides of the wall.
However, for $N>2$ this is not true anymore.
In the last part of the work, we evaluate the deviation of the generating functions of VW invariants at the walls
from the arithmetic average. The result is given by an iterative formula expressed through the generating functions of lower ranks 
(see Theorem \ref{th-wall})
and also hints that it might be a particular manifestation of a universal structure.
We hope that it will help better understand the meaning of these invariants.

\end{enumerate}

The organization of the paper is as follows.
In the next section we review the anomaly equation and the solution for the generating functions of refined VW invariants and their 
modular completions on Hirzebruch surfaces.
Then in \S\ref{sec-collinear} we reconsider the anomaly in the case of collinear D4-brane charges
and prove its simplified form.
In \S\ref{sec-varcompl} we compute the variation of the completion of the generating functions upon change of polarization.
Finally, in \S\ref{sec-VEwall} we compute the generating functions at the walls of marginal stability.
Several appendices contain useful facts about generalized error functions and details of some calculations.

\section{Modular anomaly and VW invariants on Hirzebruch surfaces}
\label{sec-anomaly}

\subsection{Anomaly equation}
\label{subsec-anomeq}

We start by recalling the anomaly equation satisfied by generating functions of refined D4-D2-D0 BPS indices.
These functions are defined by
\be 
\hr_{p,\mu}(\tau,z) = \sum_{\hq_0 \leq \hq_0^{\rm max}}
\frac{\bOm_{p,\mu}(\hq_0,y)}{y-y^{-1}}\,e^{-2\pi\I \hq_0 \tau },
\label{defhDTr}
\ee 
where $y=e^{2\pi\I z}$ is the refinement parameter and $\bOm_{p,\mu}(\hq_0,y)$ is a rational version (as \eqref{defcref} in the VW case)
of the Poincar\'e polynomial in $y$ of the moduli space of semi-stable coherent sheaves on a CY threefold $\CY$,
characterized by a charge vector $\gamma=(0,p^a,q_a,q_0)$, $a=1,\dots, b_2(\CY)$, with components corresponding 
to D6, D4, D2 and D0 charge, respectively. In particular, $p$ corresponds to a divisor $\cD_p=p^a\cD_a$ where $\cD_a$
is a basis in $H_4(\CY,\IZ)$ and therefore itself can be seen as an element of this homology group.
Similarly, $q\in H_2(\CY,\IZ)+\hf\, p$
where the lowering and raising indices is done by the metric $\kappa_{ab}=\kappa_{abc}p^c$
defined by the triple intersection numbers $\kappa_{abc}$ and the D4-brane charge.
This metric endows the lattice $\Lambda_p:=H_4(\CY,\IZ)$ with a quadratic from and ensures that $\Lambda_p^*=H_2(\CY,\IZ)$ is the dual lattice.
The embedding $\Lambda_p\hookrightarrow \Lambda_p^*$ provides the decomposition 
\be
q_a=\kappa_{ab}\eps^b +\mu_a+\hf\,\kappa_{ab}p^b,
\label{decomp-spfl}
\ee 
where $\eps\in\Lambda_p$, while $\mu\in \Lambda_p^*/\Lambda_p$ is the residue class 
taking $\det\kappa_{ab}$ independent values.
The BPS indices do not depend on the shift of D2-brane charge by an element of $\Lambda_p$ 
(the transformation known as spectral flow or tensoring with a line bundle). This is why they depend only on $p^a$, $\mu_a$
and a modification of the D0 charge remaining invariant under this transformation
\be
\hq_0=q_0-\hf\, \kappa^{ab}q_a q_b.
\label{def-hq0}
\ee 
The important fact which makes the function \eqref{defhDTr} well-defined is that $\hq_0$ is bounded from above.

The main result of \cite{Alexandrov:2019rth} is that the generating functions $\hr_{p,\mu}$ are 
(vector valued higher depth) mock Jacobi forms\footnote{The weight and index of the Jacobi forms are given by 
$$
w=-\hf\, \rank(\Lambda_p),
\qquad
m(p)=-\chi(\cO_{\cD_p})-\lambda_a p^a,
\label{index-Hr}
$$ 
where $\chi(\cO_{\cD_p})=\frac16\, \kappa_{abc}p^a p^b p^c+ \frac{1}{12}\, c_{2,a} p^a$ is the arithmetic genus of the divisor 
$\cD_p$ and $\lambda_a$ are integer parameters whose origin is not understood yet in general case.
We refer to \cite{Alexandrov:2025sig} for definitions and properties of mock modular and mock Jacobi forms.}, 
and their modular completion has the following form
\be
\whhr_{p,\mu}(\tau,\btau,z)=\hr_{p,\mu}(\tau,z)
+ \sum_{n=2}^{r} \frac{1}{2^{n-1}}
\sum_{\sum_{i=1}^n p_i=p}
\sum_{\bfmu}
\rmRirf{\bfp}_{\mu,\bfmu}(\tau, \btau,z)
\prod_{i=1}^n \hr_{p_i,\mu_i}(\tau,z),
\label{exp-whhr}
\ee
where the sum goes over ordered decompositions of the divisor $\cD_p$ into effective divisors, $r$ is the maximal number of components in 
these decompositions, and bold letters denote ordered sets of $n$ elements such as $\bfp=(p_1,\dots,p_n)$. The most important ingredient
is the coefficient $\rmRirf{\bfp}_{\mu,\bfmu}$ determining the non-holomorphic anomalous contribution:
\be
\rmRirf{\bfp}_{\mu,\bfmu}(\tau,\btau,z)
=
\sum_{q_i\in \Lambda_{p_i}+\mu_i+\hf  p_i \atop \sum_{i=1}^n q_i=\mu+\hf p}  
\Sym \Bigl\{ (-y)^{\sum_{i<j} \gamma_{ij}}\, \scRrf_n(\bfhgam;\tau_2,\beta)
\Bigr\} \,e^{\pi\I \tau Q_n(\bfhgam)},
\label{Rirf-to-rmRrf}
\ee
where $\Sym$ denotes symmetrization\footnote{The symmetrization could be omitted since it is automatically implemented by the sum over 
decompositions of $p$ in \eqref{exp-whhr}, but for our purposes it will be convenient to keep it.} 
(with weight $1/n!$) with respect to permutation of charges $\hgam_i=(p_i^a,q_{i,a})$,
$\gamma_{ij}$ is the Dirac pairing
\be 
\gamma_{ij}=q_{i,a}p_j^a-q_{j,a}p_i^a,
\label{Dirac}
\ee 
and we represented the refinement parameter as $z=\alpha-\tau\beta$ with $\alpha,\beta\in \IR$.
The coefficient \eqref{Rirf-to-rmRrf} has the form a theta series over the lattice 
\be 
\bfLam^{(\bfp)} =
\(\oplus_{i=1}^n \Lambda_{p_i}\)/\Lambda_p,
\label{latp}
\ee
endowed with the quadratic form\footnote{$\bfq=\bigl(\kappa_1^{ab} q_{1,b} ,\dots ,\kappa_n^{ab}q_{n,b}\bigr)$ 
	can be seen as an element of the dual lattice $(\bfLam^{(\bfp)})^*$ 
	up to addition of the unit vector $\boldsymbol{1}$. 
	This is why we allow the sum of its components to be non-zero.
	See also appendix \ref{ap-lat}.} 
\be
Q_n(\bfhgam)= \kappa^{ab}q_a q_b-\sum_{i=1}^n\kappa_i^{ab}q_{i,a} q_{i,b} \, ,
\qquad
q_a=\sum_{i=1}^n q_{i,a},
\label{defQlr}
\ee
where $\kappa_{i,ab}=\kappa_{abc}p_i^c$ and $\kappa_i^{ab}$ is its inverse, with the kernel given by $\scRrf_n$.
Note that the quadratic form \eqref{defQlr} has $n-1$ negative directions and therefore 
the kernel must be non-trivial for the theta series to be convergent.

To construct the kernel, we need to introduce several objects.
Most of them will also appear in the next subsection in the expression for the generating functions of VW invariants.
First, we define a set of vectors $\bfv_{ij}\in\bfLam^{(\bfp)}$ by specifying their components
\be 
(\bfv_{ij})_k^a=\delta_{ki} p_j^a-\delta_{kj} p_i^a,
\label{def-vij}
\ee
as well as two other sets
\be
\bfv_\ell= \sum_{i=1}^\ell\sum_{j=\ell+1}^n\bfv_{ij},
\qquad
\bftet = \sum_{i<j} \bfv_{ij}.
\label{def-bfvk}
\ee
The meaning of these vectors is very simple: they allow us to get the Dirac pairing \eqref{Dirac} upon contraction with 
the vector of electric charges 
$\bfq=\bigl(\kappa_1^{ab} q_{1,b} ,\dots ,\kappa_n^{ab}q_{n,b}\bigr)$ 
by means of the bilinear form
\be
\bfx\ast\bfy=\sum_{i=1}^n \kappa_{i,ab}x_i^a y_i^b.
\label{biform}
\ee
Indeed, it is easy to check that 
\be
\bfv_{ij}\ast\bfq=\gamma_{ij},
\qquad 
\Gamma_\ell:=\bfv_\ell \ast\bfq=\sum_{i=1}^\ell\sum_{j=\ell+1}^n\gamma_{ij},
\qquad
\bftet\ast\bfq=\sum_{i<j}\gamma_{ij}.
\label{defGammae}
\ee

These vectors allow us to introduce two sets of functions
\bea
\Er_n(\bfhgam;\tau_2,\beta)&=& \Phi^E_{n-1}\(\{ \bfv_{\ell}\};\sqrt{2\tau_2}\,(\bfq+\beta\bftet )\),
\label{Erefsim}
\\
\Efrf_n(\bfhgam)&=& S(\Gamma_{\Zv_{n-1}}):=
\sum_{\cI\subseteq \Zv_{n-1}} e_{|\cI|}\,\delta(\Gamma_\cI)\, 
\sgn(\Gamma_{\Zv_{n-1}\setminus \cI}),
\label{defSlin}
\eea
where 
$\Phi^E_n$ are the generalized error functions defined in \eqref{generrPhiME},
$\Zv_{n}=\{1,\dots,n\}$, $\Gamma_\cI=\{\Gamma_\ell\}_{\ell\in\cI}$, and\footnote{$e_n$ 
	are the Taylor coefficients of $\mbox{Arctanh}(x)/x$. More on their properties, see appendix \ref{ap-coef}.}
\be
e_n=\left\{\begin{array}{ll}
	0 & \mbox{\rm if $n$ is odd},
	\\
	\frac{1}{n+1}\ & \mbox{\rm if $n$ is even},
\end{array}\right.
\label{def-genthm}
\ee  
\be 
\delta(x_\cI)=\prod_{\ell\in\cI}\delta_{x_\ell},
\qquad
\sgn(x_\cI)=\prod_{\ell\in \cI}\sgn(x_\ell).
\label{def-delta}
\ee
The functions $\Efrf_n$ are related to the large $\tau_2$ limit of $\Er_n$ evaluated at $\beta=0$.
However, they are not equal to the limit directly, but require an additional symmetrization \cite{Alexandrov:2024jnu}\footnote{In fact, 
	there is a stronger relation \cite{Alexandrov:2025sig}
	$$ 
	\Sym\Bigl\{y^{\sum_{i<j} \gamma_{ij}}\lim_{\tau_2\to\infty}\Er_n(\bfhgam;\tau_2,0)\Bigr\}
	=\Sym\Bigl\{y^{\sum_{i<j} \gamma_{ij}}\Efrf_n(\bfhgam) \Bigr\}.
	$$
	\label{foot-limy}}
\be 
\Sym\lim_{\tau_2\to\infty}\Er_n(\bfhgam;\tau_2,0)=\Sym\Efrf_n(\bfhgam) .
\label{Efref-lim}
\ee
Then the difference $\Eprf_n:=\Er_n-\Efrf_n$ represents (at $\beta=0$ and after symmetrization) 
the exponentially suppressed part of $\Er_n$.

\begin{figure}[t]
	\begin{center}
		\includegraphics[width=5.7cm]{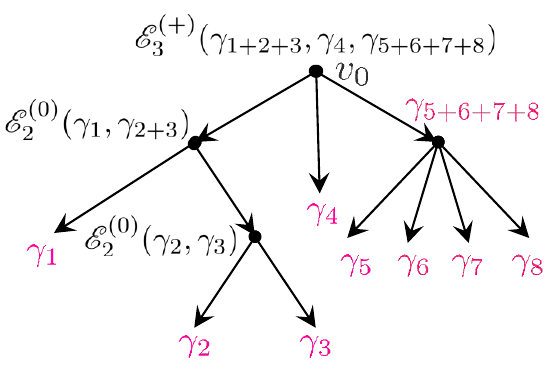}
	\end{center}
	\vspace{-0.9cm}
	\caption{An example of Schr\"oder tree contributing to $\scRrf_8$ and 
	the weights assigned to its vertices by \eqref{refsolRn}. The charges are shown
	using the shorthand notation $\gamma_{i+j}=\gamma_i+\gamma_j$.
	\label{fig-WRtree}}
\end{figure}

Finally, let $\IT_n^{\rm S}$ be the set of the Schr\"oder trees with $n$ leaves, i.e. rooted planar trees such that 
all vertices $\vert\in V_T$ (the set of vertices of $T$ excluding the leaves) have $n_\vert\geq 2$ children.
Their $n$ leaves carry charges $\hgam_i$, whereas the charges assigned to other vertices
are given recursively by
the sum of charges of their children $\Ch(\vert)$, i.e. $\hgam_\vert=\sum_{\vert'\in\Ch(\vert)}\hgam_{v'}$.
Besides, we take $n_T$ to be the number of elements in $V_T$ and $\vert_0$ to denote the root vertex.
Then, given a Schr\"oder tree $T$,
we set $\Ev_{\vert}\equiv \Ev_{n_\vert}(\{\hgam_{\vert'}\})$ (and similarly for $\Ef_{\vert}, \Ep_{\vert}$
and many similar functions that will appear below)
where $\vert'\in \Ch(\vert)$ runs over the $n_\vert$ children of the vertex $\vert$ (see Fig. \ref{fig-WRtree}). 
Using these notations, we can finally write a formula for the kernel $\scRrf_n$:
\be
\scRrf_n\(\bfhgam;\tau_2,\beta\) = \sum_{T\in\IT_n^{\rm S}}(-1)^{n_T-1} 
\Eprf_{v_0} \prod_{v\in V_T \backslash \{v_0\}}\Efrf_v.
\label{refsolRn}
\ee

A few comments are in order:
\begin{itemize} 
\item 
The mock modularity of the generating functions $\hr_{p,\mu}$ and the equation \eqref{exp-whhr} for their completion 
do not hold everywhere in the moduli space, but only in the large volume attractor chamber \cite{deBoer:2008fk}.

\item 
Note the presence of the Kronecker symbols $\delta(\Gamma_\cI)$ in \eqref{defSlin}. 
Their effect is such that, if $m$ of $\Gamma_\ell$'s vanish, 
$\Efrf_n$ is given by $e_m$ multiplied by the product of $n-m-1$ sign functions of non-vanishing $\Gamma_\ell$'s. 
In general, the Kronecker symbols do contribute and
generate those special contributions that were mentioned in the introduction.

\item 
In this work we will not need the above construction in full generality.
The case we are really interested in is when all D4-brane charges are collinear, i.e. $p^a=r p_0^a$ with fixed 
irreducible $p_0^a$.
Then the sum over Schr\"oder trees in \eqref{refsolRn} can be explicitly evaluated
and  the kernel crucially simplifies, as will be shown in \S\ref{sec-collinear}. 
\end{itemize}

\subsection{Generating functions of refined VW invariants}
\label{subsec-genfunVW}

Refined VW invariants of surface $S$ are defined by Poincar\'e polynomials of the moduli spaces $\cM_{\gamma,J}$
of semi-stable coherent sheaves on $S$, namely,
\be
\Omega_J(\gamma,y)= \sum_{p=0}^{2d_{\IC}(\cM_{\gamma,J})} y^{p-d_{\IC}(\cM_{\gamma,J})}\, b_p(\cM_{\gamma,J}),
\label{intOm}
\ee
where $d_{\IC}(\cM)$ is the complex dimension of $\cM$ and $b_p(\cM)$ is its Betti number.
Both the moduli space and the invariants are parametrized by charge $\gamma=(N,\mu,\hf\mu^2-n)$
and polarization $J$ determining the stability condition. (See, e.g., \cite{Beaujard:2020sgs}
	and references therein for a more rigorous and detailed exposition.)
From physics point of view, $\cM_{\gamma,J}$ parametrizes solutions of hermitian Yang-Mills equations with gauge group $SU(N)$, 
while $J$ is the K\"ahler form appearing in the self-duality condition on the field strength $F$.
The charge $\gamma$ can be seen as the Chern character of the sheaf, in particular, 
$\mu=-c_1(F)\in \Lambda_S:=H^2(S,\IZ)$ and $n=\int_S c_2(F)\in \IZ$ are Chern classes of the gauge bundle.
In fact, the moduli space does not change upon tensoring $F$ with a line bundle $\cL$, 
which leads to the shift $\mu\to \mu-Nc_1(\cL)$, but leaves invariant the rank $N$ and
the Bogomolov discriminant
\be
\Delta(F):= \frac{1}{N} \left( n - \frac{N-1}{2N} \mu^2 \right),
\label{Bogom}
\ee
where $\mu^2= \int_S\mu^2$.
Due to this, the parameter $\mu$ can be restricted to $\Lambda_S/N\Lambda_S$.

At this point, the parallel to the BPS indices discussed in the previous subsection should become clear:
$N$ corresponds to D4-brane charge $p^a$, while $N\Delta(F)$ is analogous to $-\hq_0$ up to a constant shift.
The precise counterpart of the generating functions \eqref{defhDTr} is\footnote{To avoid cluttering, we will not use superscript ``ref"
	for the generating functions of refined VW invariants.} 
\be
\label{defhVWref}
h_{N,\mu,J}(\tau,z) =
\sum_{n\geq 0}
\frac{\bOm_J(\gamma,y)}{y-y^{-1}}\,
\q^{N\(\Delta(F) - \tfrac{\chi(S)}{24}\)},
\ee
where we introduced the standard notation $\q=e^{2\pi\I\tau}$, $\chi(S)$ is the Euler number of the surface, 
and we defined the rational refined invariants
\cite{joyce2008configurations,Manschot:2010qz}
\be
\bOm_J(\gamma,y) =  \sum_{d|\gamma} \frac{(-1)^{d-1}(y-y^{-1})}{d(y^d-y^{-d})}\, \Omega_J(\gamma/d, - (-y)^{d}).
\label{defcref}
\ee
Only the generating functions of the rational invariants, rather than of the integer ones,
manifest nice modular properties \cite{Manschot:2010xp}.
Note that the inverse relation to \eqref{defcref} is given by (see, e.g., \cite{Pioline:2025xgf})
\be
\Omega_J(\gamma,y) =  \sum_{d|\gamma} \frac{\upmu(d)}{d}\,\frac{(-1)^{d-1}(y-y^{-1})}{y^d-y^{-d}}\,  \bOm_J(\gamma/d, - (-y)^{d}),
\label{inv-ratref}
\ee
where $\upmu(d)$ is the multiplicative M\"obius function
\be 
\upmu(d)=\left\{\begin{array}{ll}
	1, & d=1,
	\\
	(-1)^k, \quad & d\mbox{ is a product of }k\mbox{ distinct primes},
	\\
	0, & \mbox{otherwise}.
\end{array}\right.
\ee 
Therefore, the generating function of integer refined VW invariants can be obtained as 
\be 
\hint_{N,\mu,J}(\tau,z)=\sum_{d|(N,\mu)} \frac{\upmu(d)}{d}\, h_{\frac{N}{d},\frac{\mu}{d},J}(d\tau,z_d),
\qquad 
z_d:=d\(z+\hf\)-\hf\, .
\label{def-hint}
\ee

As shown in \cite{Alexandrov:2019rth}, provided $b_2^+(S)=1$ and $b_1(S)=0$, the generating functions of VW invariants 
$h_{N,\mu,J}$ can be identified with $\hr_{p,\mu}$
from the previous subsection, and hence satisfy the same modular anomaly equation, 
upon choosing $\CY$ to be the total space of the canonical bundle over 
the surface, $\CY={\rm Tot}(K_S)$, which is assumed to have a single section. 
In this case:
\begin{itemize}
\item 
$b_2(\CY)=b_2(S)+1$;

\item 
the only admissible D4-brane charges are $p^a=Np_0^a$, where $p_0^a$ is the charge corresponding to
divisor $[S]$, which in a suitable basis has components $p_0^a=(1,-c_1^\alpha(S))$, $\alpha=1,\dots,b_2(S)$;
	
\item 
in this basis, D2-brane charges are restricted to have vanishing first component, i.e. $q_a=(0,q_\alpha)$;

\item 
the quadratic form $\kappa_{ab}$ is given by
\be 
\kappa_{ab}=N\(\begin{array}{cc}
	0 & 0 \\ 0 & C_{\alpha\beta}
	\end{array}\), 
\ee 
where $C_{\alpha\beta}$ is the intersection matrix on $S$;

\item 
the large volume attractor chamber in the K\"ahler moduli space of $\CY$ corresponds to the canonical chamber 
containing the point $J=c_1(S)$ in the space of polarizations.
\end{itemize}  
These data allow us immediately to rewrite the anomaly equation \eqref{exp-whhr} and all its ingredients 
in terms of the surface data.
For Hirzebruch and del Pezzo surfaces, the resulting equation has been solved for $h_{N,\mu,c_1}$ 
and their completions $\whh_{N,\mu,c_1}$ in \cite{Alexandrov:2020bwg}.
The solution has been generalized to arbitrary $J$ belonging to a two dimensional plane in \cite{Alexandrov:2020dyy}.
In the rest of this subsection, we present this solution restricting to Hirzebruch surfaces $\IF_m$, $m=0,1,2$.

To this end, let us recall that $\IF_m$ is a projectivization of the $\cO(m)\oplus \cO(0)$ bundle over $\IP^1$.
It has $b_2(\IF_m)=2$ and in the basis given by the curves corresponding to the fiber $[f]$ and the section $[s]$ of the bundle,
the intersection matrix and the first Chern class are the following
\be
C_{\alpha\beta} = \begin{pmatrix} 0 & 1 \\ 1 & -m \end{pmatrix},
\qquad
c_1(\IF_m) = (m+2)[f]+2[s].
\label{dataFk}
\ee
The intersection matrix is unimodular, has signature (1,1), and induces a bilinear form on the lattice $\Lambda_S=H^2(S,\IZ)$,
which will be denoted as $x\cdot y$.
Similarly to \eqref{latp}, we consider the lattice 
\be 
\bfLam^{(\bfN)}_S =
\(\oplus_{i=1}^n N_i\Lambda_S\)/(N\Lambda_S),
\label{latVW}
\ee
where $\bfN=(N_1,\dots,N_n)$ is a $n$-tuple of positive integers and $N=\sum_{i=1}^n N_i$.
With this lattice, we associate a theta series (cf. \eqref{Rirf-to-rmRrf})\footnote{The physical charges differ from 
	$q_i$ by the presence of the additional term $-\hf N_i c_1(S)$ coming from the last term in \eqref{decomp-spfl}.
	However, $q_i$ enter the construction only through the Dirac products \eqref{gam12} where this term cancels.
	Due to this reason, we omit it from the very beginning.}
\be 
\vth^{(\bfN)}_{\mu,\bfmu}(\tau,z;\cF_n)=
\sum_{q_i\in N_i\Lambda_S+\mu_i\atop \sum_{i=1}^n q_i=\mu}
\cF_n(\bfhgam)\,\q^{\hf\, Q_n(\bfhgam)}\, y^{\sum_{i<j}\gamma_{ij}(c_1(S))},
\label{Theta-bfN}
\ee 
where $\mu_i\in \Lambda_S/N_i\Lambda_S$ are residue classes, 
$\hgam_i=(N_i,q_{i,\alpha})$ are relevant charges, 
$Q_n(\bfhgam)$ is the same quadratic form as \eqref{defQlr} which now takes the form
\be
Q_n(\bfhgam)=
\frac{1}{N}\, q^2-\sum_{i=1}^n \frac{1}{N_i} q_{i}^2
=-\sum_{i<j}\frac{(N_i q_j - N_j q_i)^2}{NN_i N_j}
\label{defQlr-VW}
\ee
with $q^2=C^{\alpha\beta}q_\alpha q_\beta$ and $C^{\alpha\beta}$ the inverse of $C_{\alpha\beta}$,
$\gamma_{ij}(J)$ denotes (cf. \eqref{Dirac})
\be
\label{gam12}
\gamma_{ij}(J)=
J^{\alpha} (N_i q_{j,\alpha} -N_j q_{i,\alpha}),
\ee
and $\cF_n(\bfhgam)$ is a kernel to be specified, which itself can depend (even non-holomorphically) on $\tau$ and $z$.
Note that the signature of $Q_n$ is $(n-1,n-1)$ and thus $\vth^{(\bfN)}_{\mu,\bfmu}$ is an indefinite theta series.

Another important ingredient is given by the generating functions $H^S_{N,\mu}$ of stack invariants briefly mentioned in 
the introduction. We are interested in $H^S_{N,\mu}$ evaluated at $J=\nv\in\Lambda_S$, 
a specific null vector in the sense that $\nv^2=0$, which for $S=\IF_m$ is given by 
\be
\nv(\IF_m)=[f].
\label{nullv}
\ee
At this point in the moduli space, these functions read \cite{Manschot:2011ym,Mozgovoy:2013zqx}
\be
H^{\IF_m}_{N,\mu}=\delta^{(N)}_{\nv\cdot\mu} \, H_N ,
\qquad
H_N=\frac{\I (-1)^{N-1} \eta(\tau)^{2N-3}}
{\theta_1(\tau,2Nz)\, \prod_{m=1}^{N-1} \theta_1(\tau,2mz)^2}\, ,
\label{gfHN}
\ee
where $\eta(\tau)$ is the Dedekind eta function, $\theta_1(\tau,z)$ is the Jacobi theta function, and we used the convenient
notation 
\be
\delta^{(n)}_x=\left\{\begin{array}{ll}
1\ \ & \mbox{if }x=0\!\! \mod n,
\\
0 & \mbox{otherwise.}
\end{array}\right. 
\label{deltan}
\ee 

Given these definitions, the generating functions of refined VW invariants $h_{N,\mu,J}$ and their completions $\whh_{N,\mu,J}$
can be presented in a unified way as 
\bea
h_{N,\mu,J}(\tau,z)&=&\Theta_{N,\mu}\Bigl(\tau,z;\{\Fv_n(J)\},\{H^S_{N',\mu}\}\Bigr),
\label{genJhN}
\\
\whh_{N,\mu,J}(\tau,z)&=&\Theta_{N,\mu}\Bigl(\tau,z;\{\whFv_n(J)\},\{H^S_{N',\mu}\}\Bigr),
\label{complFBJ}
\eea
where we introduced a contraction of the theta series \eqref{Theta-bfN} with a set of vector valued functions
\be 
\Theta_{N,\mu}\Bigl(\tau,z;\{\cF_n\},\{H_{N',\mu}\}\Bigr)=
\sum_{n=1}^N \frac{1}{2^{n-1}}\sum_{\sum_{i=1}^n N_i=N}\sum_{\bfmu}
\vth^{(\bfN)}_{\mu,\bfmu}(\tau,z;\cF_n)
\prod_{i=1}^n H_{N_i,\mu_i}(\tau,z).
\label{def-Theta}
\ee 
Such contraction represents a common structure for most relevant quantities and will often appear below. 
In particular, the r.h.s. of the anomaly equation \eqref{exp-whhr} also has the same form upon identifying D4-brane charges $p_i^a$ 
with ranks $N_i$. Furthermore, $\Theta_{N,\mu}$ satisfies the important composition property 
\be \\
\Theta_{N,\mu}\Bigl(\{\cF_n\},\{\Theta_{N',\mu}(\{\cG_n\},\{H_{N'',\mu}\})\}\Bigr)
=\Theta_{N,\mu}\Bigl(\{\cF^{\rm tot}_n\},\{H_{N',\mu}\}\Bigr),
\label{compTheta}
\ee  
where
\be 
\cF^{\rm tot}_n(\bfhgam)=\sum_{m=1}^n \sum_{\sum_{s=1}^{m} n_s=n}\cF_m(\{\hbbgam_s\})\prod_{s=1}^m\cG_{n_s}(\bfhgam^{(s)}),
\label{compos-ker}
\ee 
and we introduced notations for the subsets of charges induced by the decomposition of $n$:
$\bfhgam^{(s)}=\{\hgam_i\}$ denotes the $s$-th subset of size $n_s$ 
and $\hbbgam_s=\sum_i \hgam^{(s)}_i$ is the total charge of this subset.\footnote{Of course, both $\bfhgam^{(s)}$ and $\hbbgam_s$
	depend on the ordered partition $\{n_s\}$. We do not indicate this dependence explicitly to avoid cluttering.}

To complete the presentation, it remains to specify the two kernels, 
$\Fv_n(J)$ and $\whFv_n(J)$. For $n=1$ they are set to 1, while for $n>1$
both of them involve the following combination of charges determined by the null vector $\nv(S)$ \eqref{nullv}
\be 
\Bv_\ell=\gamma_{\ell,\ell+1}(\nv(S))+\beta N_\ell N_{\ell+1}(N_\ell+N_{\ell+1})\, \nv(S)\cdot c_1(S),
\ee 
where, as in the previous subsection, $\beta$ controls the imaginary part of the refinement parameter as $z=\alpha-\tau\beta$.
Then the kernel defining the generating functions is given by 
\be
\begin{split}
\Fv_n(\bfhgam;J)
=&\,
\sum_{\cI\subseteq \Zv_{n-1}} e_{|\cI|}\,\delta(\Gamma_\cI(J))
\prod_{\ell\in \Zv_{n-1}\setminus \cI}\Bigl(\sgn(\Gamma_\ell(J))-\sgn(\Bv_\ell)\Bigr)
\\
=&\,
\sum_{\cI\subseteq \Zv_{n-1}}S(\Gamma_\cI(J))\,\sgn\bigl(-\Bv_{\Zv_{n-1}\setminus \cI}\bigr),
\end{split} 
\label{kerg}
\ee
where we used the notations from \eqref{defSlin}-\eqref{def-delta} and 
\be
\Gamma_\ell(J):=\sum_{i=1}^\ell\sum_{j=\ell+1}^n\gamma_{ij}(J).
\label{defGammaVW}
\ee
The kernel defining the modular completion of the generating functions can be represented\footnote{In \cite{Alexandrov:2020bwg,Alexandrov:2020dyy}
	another representation was used for this kernel which is based on a sum over subsets as in \eqref{kerg}.
	However, it is easy to see that the two representations are equivalent (see the next subsection for more on this equivalence)
	and the one in \eqref{kerhg-tree} is more convenient for our purposes.}
	 as a sum over
planar rooted trees with $n$ leaves all having depth 2 (the property known as uniform leaf depth). 
The set of these trees will be denoted by $\IT^{\rm r,2}_n$. Then we have
\be
\whFv_n(\bfhgam;J)=\sum_{T\in \IT^{\rm r,2}_n} 
\Ev_{v_0}(J)\prod_{v\in V_T\setminus\{v_0\}}\sgn\bigl(-\Bv_{\Ch'(v)}\bigr).
\label{kerhg-tree}
\ee
Here the charges are assigned to the vertices and leaves of the tree following the same rules as for the Schr\"oder trees 
and we used the same notations as in \eqref{def-delta} and \eqref{refsolRn}, except that
$\Ch'(v)$ denotes the set of children of vertex $v$ with the rightmost child omitted (see Fig. \ref{fig-trees-r2}). 
The only new thing is that the vectors $\bfv_{ij}$, $\bfv_\ell$, etc., appearing in the generalized error function \eqref{Erefsim}
are defined by D4-brane charges relevant for VW theory and thus take the form (cf. \eqref{def-vij}, \eqref{def-bfvk})
\be 
(\bfv_{ij})_k^\alpha=J^\alpha(\bffv_{ij})_k,
\qquad 
(\bfv_\ell)_k^\alpha=J^\alpha(\bffv_\ell)_k,
\label{bfvVW}
\ee
where
\be
(\bffv_{ij})_k=\delta_{jk} N_i-\delta_{ik} N_j,
\qquad 
\bffv_\ell= \sum_{i=1}^\ell\sum_{j=\ell+1}^n\bffv_{ij},
\label{def-bffvk}
\ee
while the bilinear form \eqref{biform} is replaced by
\be
\bfx\ast\bfy=\sum_{i=1}^n N_i x_i\cdot y_i.
\label{biformVW}
\ee
 
\begin{figure}[t]
	\begin{center}
		\includegraphics[width=18.5cm]{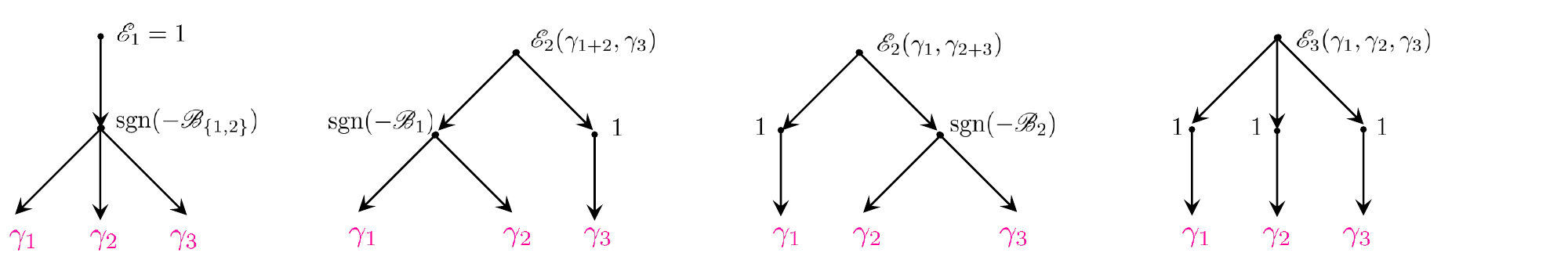}
	\end{center}
	\vspace{-1cm}
	\caption{Trees from $\IT^{\rm r,2}_3$ and the weights assigned to their vertices by \eqref{kerhg-tree}.
		\label{fig-trees-r2}}
\end{figure}

\subsection{Collinear charges and trees}
\label{subsec-useful}

Before we proceed with the facts we want to prove, it is useful to make a few technical observations on typical structures appearing 
in the above equations and the ones to follow.

As explained above, in the VW case D4-brane charges are restricted to be collinear $p^a=r p_0^a$.
Such restriction is responsible for many simplifications. For example, in \cite{Alexandrov:2019rth} it was proven 
that in this case the holomorphic anomaly of the completion, $\p_{\btau}\whhr_{p,\mu}$, gets only contributions 
proportional to $\whhr_{p_1,\mu_1}\whhr_{p_2,\mu_2}$ with $p_1+p_2=p$, whereas in general case all decompositions of $p$, 
with the number of constituents up to $r$, contribute.

Here we show another important simplification. It concerns the vectors $\bfv_\ell$ \eqref{def-bfvk} and relies on the following
\begin{lemma}
	\label{lemma1}
	Let $\cI\subset \Zv_{n-1}$ and $i\notin \cI$. We denote $i_-=\max\{j\in \cI\cup\{0\}:\ j<i\}$
	and $i_+=\min\{j\in \cI\cup\{n\}:\ j>i\}$. Besides, $\bfv_{i\perp \cI}$ denotes the projection of $\bfv_i$ 
	orthogonal to the subspace spanned by $\{\bfv_j\}_{j\in\cI}$.	
	Then in the collinear case where $p_i=r_ip_0$, one has 
	\be
	\bfv_{\ell\perp \cI}=\lambda \sum_{i=i_-+1}^\ell\sum_{j=\ell+1}^{i_+}\bfv_{ij},
	\qquad 
	\lambda=\frac{r}{\sum_{l=i_-+1}^{i_+}r_l}\, .
	\ee
\end{lemma}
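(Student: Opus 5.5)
The plan is to make the collinear structure explicit, so that the lemma reduces to an orthogonal projection in $\IR^n$ with the weighted inner product $\sum_k r_k x_k y_k$. I read the statement with $i$ replaced by $\ell$ throughout: $\ell\notin\cI$, with $\ell_\pm$ the neighbours of $\ell$ in $\cI\cup\{0,n\}$. I take $r=\sum_k r_k$. For $p_i=r_ip_0$ all the vectors $\bfv_{ij}$ and $\bfv_\ell$ have components proportional to $p_0$, namely $(\bfv_{ij})_k=(\delta_{ki}r_j-\delta_{kj}r_i)\,p_0$. On such vectors the bilinear form \eqref{biform} becomes $\kappa_{0,ab}p_0^ap_0^b\sum_k r_k x_k y_k$. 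The overall positive-or-negative constant is irrelevant for orthogonality, so everything reduces to $\IR^n$ with the form $\langle x,y\rangle=\sum_k r_k x_k y_k$.

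\textbf{Step 1: explicit components.} I would compute the components directly from \eqref{def-bfvk}. Write $R_A=\sum_{k\in A}r_k$ and let $\chi_A$ be the indicator vector of $A$. Then $(\bfv_\ell)_k=R_{(\ell,n]}$ for $k\le\ell$ and $(\bfv_\ell)_k=-R_{[1,\ell]}$ for $k>\ell$. Equivalently,
\be
\bfv_\ell=r\,\chi_{[1,\ell]}-R_{[1,\ell]}\,\boldsymbol{1}.
\ee
Two facts follow. First, every $\bfv_{ij}$, and hence every $\bfv_\ell$, satisfies $\langle x,\boldsymbol{1}\rangle=0$. Second, all vectors of interest lie in the hyperplane $\boldsymbol{1}^\perp$.

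\textbf{Step 2: identify the span and its complement.} Let $\cI\cup\{0,n\}=\{j_0<j_1<\dots\}$. This partitions $\{1,\dots,n\}$ into blocks $B_a=(j_{a-1},j_a]$. The formula in Step 1 shows that $\Span\{\bfv_j\}_{j\in\cI}\oplus\IR\boldsymbol{1}$ equals the span of $\chi_{[1,j]}$ for $j\in\cI\cup\{n\}$. This is exactly the space of vectors that are constant on each block. Since $\Span\{\bfv_j\}_{j\in\cI}\subset\boldsymbol{1}^\perp$, the orthogonal complement of $\Span\{\bfv_j\}_{j\in\cI}$ inside $\boldsymbol{1}^\perp$ is the set of $x$ with $\sum_{k\in B_a}r_kx_k=0$ for every block $B_a$.

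\textbf{Step 3: compute the projection and compare.} The vector $\bfv_\ell$ is $r\chi_{[1,\ell]}$ plus a multiple of $\boldsymbol{1}$. Restricted to any block other than $B=(\ell_-,\ell_+]$, it is constant. On $B$ itself it takes the value $r$ on $(\ell_-,\ell]$ and $0$ on $(\ell,\ell_+]$, up to the common shift. To obtain $\bfv_{\ell\perp\cI}$, I subtract from each block its $r$-weighted mean; this subtraction is a block-constant vector, hence lies in $\Span\{\bfv_j\}_{j\in\cI}\oplus\IR\boldsymbol{1}$. Because $\bfv_\ell$ already lies in $\boldsymbol{1}^\perp$, the subtracted vector also lies in $\boldsymbol{1}^\perp$, so it belongs to $\Span\{\bfv_j\}_{j\in\cI}$. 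The remainder is supported on $B$. On $(\ell_-,\ell]$ it equals $r R_{(\ell,\ell_+]}/R_B$, and on $(\ell,\ell_+]$ it equals $-rR_{(\ell_-,\ell]}/R_B$. By construction it satisfies the orthogonality conditions of Step 2.

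Finally, I would evaluate $\sum_{i=\ell_-+1}^{\ell}\sum_{j=\ell+1}^{\ell_+}\bfv_{ij}$ componentwise. For $k\in(\ell_-,\ell]$ it gives $R_{(\ell,\ell_+]}$, and for $k\in(\ell,\ell_+]$ it gives $-R_{(\ell_-,\ell]}$. This matches the remainder above with $\lambda=r/R_B$, which is the claim.

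The only delicate point is Step 2. I need to check carefully that adjoining $\boldsymbol{1}$ does not spoil the identification of the span. I also need to check that the projection is taken in the genuine lattice space rather than modulo $\Lambda_p$. Both issues are settled by the observation that all $\bfv$'s lie in $\boldsymbol{1}^\perp$.
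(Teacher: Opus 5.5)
Your proof is correct, but it takes a different route from the paper's.

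The paper argues computationally, in three steps:
\begin{itemize}
\item It writes down the projection of $\bfv_\ell$ orthogonal to a single $\bfv_k$ in \eqref{iperpj}, separately for $k>\ell$ and $k<\ell$.
\item It observes that projecting out the nearest element of $\cI$ on one side already gives a vector orthogonal to all $\bfv_j$ further out on that side. This settles the one-sided cases $\ell_-=0$ or $\ell_+=n$.
\item In the two-sided case it computes the projection orthogonal to $\{\bfv_{\ell_-},\bfv_{\ell_+}\}$ and checks a posteriori that the result is orthogonal to the remaining $\bfv_j$, $j\in\cI$.
\end{itemize}

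You instead work in $\IR^n$ with the weighted form $\sum_k r_kx_ky_k$ and use $\bfv_\ell=r\chi_{[1,\ell]}-R_{[1,\ell]}\boldsymbol{1}$. This identifies $\Span\{\bfv_j\}_{j\in\cI}\oplus\IR\boldsymbol{1}$ with the block-constant vectors, so the projection comes out in one step by subtracting $r$-weighted block means. Your use of $\bfv_\ell\in\boldsymbol{1}^\perp$ correctly ensures that the subtracted block-constant piece lies in $\Span\{\bfv_j\}_{j\in\cI}$ itself and not merely in its sum with $\IR\boldsymbol{1}$.

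What your route buys:
\begin{itemize}
\item It is uniform in $\cI$, with no case split and no Gram--Schmidt calculations left to the reader.
\item It makes manifest why only the block containing $\ell$ matters and why $\lambda>0$; both facts are what \eqref{rel-bfvperp} and the later factorization of the $\delta$-factors rely on.
\item It sharpens the orthogonality fact the paper invokes. Your formula gives $\bfv_\ell\ast\bfv_{ij}\propto r_ir_j\bigl((\bfv_\ell)_i-(\bfv_\ell)_j\bigr)$, which vanishes unless $\ell$ separates $i$ and $j$. The paper's phrase ``for $\ell\ne i,j$'' has to be read in that sense.
\end{itemize}

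The paper's route, in exchange, stays with the original vectors and yields the explicit single-vector projections \eqref{iperpj} along the way.
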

\begin{proof}
	It is straightforward to calculate that 
	\be
	\begin{split}
		\bfv_{\ell\perp k} =&\, \frac{r}{\sum_{l=1}^k r_l}\sum_{i=1}^\ell \sum_{j=\ell+1}^k \bfv_{ij},
		\qquad\quad \
		k>\ell,
		\\  
		\bfv_{\ell\perp k} =&\, \frac{r}{\sum_{l=k+1}^n r_l}\sum_{i=k+1}^\ell \sum_{j=\ell+1}^n \bfv_{ij},
		\qquad 
		k<\ell. 
	\end{split}
	\label{iperpj}
	\ee 
	Since $\bfv_\ell\ast\bfv_{ij}=0$ for $\ell\ne i,j$, 
	this result ensures that $\bfv_{\ell\perp i}\ast \bfv_j=0$ both for $\ell<i<j$ and for $j<i<\ell$, and thus already proves the claim
	in the cases of $i_-=0$ and $i_+=n$. In other cases, we compute 
	\be
		\bfv_{\ell\perp \{i_-,i_+\}} = \frac{r}{\sum_{l=i_-+1}^{i_+}r_l}\sum_{i=i_-+1}^\ell\sum_{j=\ell+1}^{i_+}\bfv_{ij}.
	\ee 
	Since this result satisfies $\bfv_{\ell\perp \{i_-,i_+\}}\ast \bfv_j=0$ for $j<i_-$ and $j>i_+$, we conclude that
	$\bfv_{\ell\perp \cI}=\bfv_{\ell\perp \{i_-,i_+\}}$, which completes the proof.
\end{proof}

Let us explain the meaning of this lemma. The vectors $\bfv_\ell$ are constructed from $n$ charges $p_i$. 
A subset $\cI=\{j_1,\dots j_{\mI-1}\}\subset\Zv_{n-1}$ induces two related partitions  
into $\mI=|\cI|+1$ disjoint (possibly empty in the second case) subsets
\be  
\begin{array}{rlrl}
\Zv_n=& \cup_{s=1}^{\mI} \cI_s, 
\qquad&	 
\cI_s=&\{j_{s-1}+1,\dots,j_s\}, 
\\
\Zv_{n-1}=&\(\cup_{s=1}^{\mI} \cI'_s\)\cup\cI,
\qquad&
\cI'_s=&\{j_{s-1}+1,\dots,j_s-1\},
\end{array} 
\label{subsets}
\ee 
where we set $j_0=0$ and $j_{\mI}=n$. 
Then Lemma \ref{lemma1} tells us that, upon orthogonalization with respect to the vectors from the subset $\cI$,
$\bfv_\ell$ for $\ell\notin\cI$
becomes proportional to a similar vector constructed from the subset of charges $\{p_i\}_{i\in\cI_s}$ 
specified by $s$ such that $\ell\in\cI'_s$:
\be 
\bfv_{\ell\perp \cI}\sim \bfv^{(s)}_\ell:=\sum_{i=j_{s-1}+1}^\ell\sum_{j=\ell+1}^{j_s}\bfv_{ij}.
\label{rel-bfvperp}
\ee  
On the other hand, for $\ell\in\cI$ it is easy to see that $\Gamma_\ell$ depends only on the total charges
of the subsets, $\hbbgam_s=\sum_{i\in\cI_s}\hgam_i$.
Namely, introducing $\bbv_{st}$ defined by $\bbp_s$ similarly to \eqref{def-vij}, one has
\be 
\bfv_{j_s}=\bbv_s:=\sum_{t=1}^s\sum_{t'=s+1}^{\mI} \bbv_{tt'}.
\label{def-tbfvs}
\ee

These two facts are particularly useful because they allow us to rewrite sums over subsets as sums over certain types of trees.
As a simple example, let us consider $\Efrf_n(\bfhgam)$ defined in \eqref{defSlin}.
Since $\Gamma_\ell=\bfv_\ell\ast\bfq$, provided $\Gamma_i=0$, one has $\Gamma_\ell=\Gamma_{\ell\perp i}:=\bfv_{\ell\perp i}\ast \bfq$.
An obvious generalization of this fact is the relation
\be 
\delta(\Gamma_\cI)\Gamma_\ell=\delta(\Gamma_\cI)\Gamma_{\ell\perp\cI},
\label{rel-delperp}
\ee 
where $\Gamma_{\ell\perp\cI}=\bfv_{\ell\perp \cI}\ast \bfq$.
Then using \eqref{rel-bfvperp} and the fact that the proportionality coefficient in this relation is positive, one obtains 
\be 
\Efrf_n(\bfhgam)=
\sum_{\cI\subseteq \Zv_{n-1}} e_{|\cI|}\,\delta(\Gamma_\cI)\, 
\prod_{s=1}^{\mI} \sgn\bigl(\Gamma^{(s)}_{\cI'_s}\bigr),
\label{Efrf-new}
\ee
where $\Gamma^{(s)}_\ell=\bfv^{(s)}_\ell\ast \bfq$.
The fact that each factor in the product depends only on a subset of charges, 
while the first factor depends only on the total charges of the subsets,
allows us to rewrite \eqref{Efrf-new} as 
\be 
\Efrf_n(\bfhgam)=\sum_{T\in \IT^{\rm r,2}_n} 
e_{n_{v_0}-1}\,\delta_{v_0}\prod_{v\in V_T\setminus\{v_0\}}S^{(0)}_v,
\label{Efrf-tree1}
\ee
where $\IT^{\rm r,2}_n$ is the same set of planar rooted trees of uniform leaf depth 2 as in \eqref{kerhg-tree} 
and we denoted
\be 
\delta_v=\prod_{\ell=1}^{n_v-1}\delta_{\Gamma_\ell},
\qquad
S^{(0)}_v=\prod_{\ell=1}^{n_v-1}\sgn(\Gamma_\ell),
\ee 
meaning, as usual, that all $\Gamma_\ell$ here are constructed from the charges of children of vertex $v$.

The representation \eqref{Efrf-tree1} is however somewhat redundant because, 
as soon as a vertex has one child, its weight is equal to 1.
See, e.g., the situation in Fig. \ref{fig-trees-r2}. 
Such vertices can be removed by imposing the condition that the trees should be of Schr\"oder type.
As a result, we can equally write
\be 
\Efrf_n(\bfhgam)=\sgn(\Gamma_{\Zv_{n-1}}) 
+\sum_{T\in \IT^{\rm S,\leq 2}_n} 
e_{n_{v_0}-1}\,\delta_{v_0}\prod_{v\in V_T\setminus\{v_0\}}S^{(0)}_v,
\label{Efrf-tree2}
\ee
where $\IT^{\rm S,\leq 2}_n$ is the set of Schr\"oder trees of depth at most 2 and
the first term arises from the tree in \eqref{Efrf-tree1} with $n_{v_0}=1$.

Finally, it is useful to note that, from the discussion around \eqref{subsets}, it should be clear that
the sum over subsets $\cI\subseteq\Zv_{n-1}$ is also equivalent to the sum over ordered partitions of $n$,
namely,
\be 
\sum_{\cI\subseteq\Zv_{n-1}} =\sum_{m=1}^n\sum_{\sum_{s=1}^{m} n_s=n}
\label{ident-sumset}
\ee 
with the obvious identification $n_s=|\cI_s|$.
Thus, in total, we have four different ways to present quantities like $\Efrf_n$, 
and below we will switch from one to another depending on which one is more convenient for our purposes.

\section{Modular anomaly for collinear charges}
\label{sec-collinear}

In this section we prove a corrected version of the conjecture put forward in \cite{Alexandrov:2025sig},
which was an attempt to generalize the observation made in \cite{Pioline:2025xgf} 
about a simplification of the modular anomaly equation \eqref{exp-whhr}. 
The simplification does not hold in general, but only when all D4-brane charges are collinear, the case relevant for VW theory.
The basic idea is that in this case the kernel \eqref{refsolRn} greatly simplifies after expressing 
the generalized error functions $\Phi_n^E$ in \eqref{Erefsim}
in terms of their complementary counterparts $\hPhi^M_n$ by means of the relation \eqref{expPhiE-mod}.\footnote{See appendix \ref{ap-generr} 
	for definitions and properties of the generalized error functions.}
The claim is that the sum over Schr\"oder trees disappears and only a single term remains, possibly supplemented
by terms involving Kronecker symbols as in \eqref{defSlin}. The latter terms are non-vanishing only on special charge configurations,
which form low dimensional sublattices of the lattice one sums over in \eqref{Rirf-to-rmRrf}.

It is important to note the Kronecker symbols in \eqref{defSlin} are non-vanishing exactly at the discontinuity loci of the functions $\hPhi^M_n$ 
where, strictly speaking, they were not defined.
Since the corresponding charge configurations do contribute to the theta series, 
if one wants to use $\hPhi^M_n$ as building blocks of the kernel $\scRrf_n$, 
their definition must be extended to the discontinuity loci.
We achieve this in appendix \ref{ap-generr}.
First, we simply require $\hPhi^M_n$ to satisfy the relation \eqref{expPhiE-mod} everywhere, including the problematic loci.
This turns out to be equivalent to defining the complementary generalized error functions by the principal value contour prescription.
Given such an extension, let us introduce the function analogous to \eqref{Erefsim}
\be 
\Mv_n(\bfhgam;\tau_2,\beta)= \hPhi^M_{n-1}\(\{ \bfv_{\ell}\};\sqrt{2\tau_2}\,(\bfq+\beta\bftet ),\sqrt{2\tau_2}\beta\bftet\).
\label{vMdef}
\ee 
Then we have the following
\begingroup
\renewcommand{\thetheorem}{1a}
\begin{theorem}
\label{thm-simple-ref}
In the collinear case where $p_i=r_ip_0$, the kernel $\scRrf_n$ \eqref{refsolRn} simplifies to
\be
\scRrf_n\(\bfhgam;\tau_2,\beta\) =
\sum_{\cI\subseteq \Zv_{n-1}}
\( 
\delta\(\Gamma_{(\Zv_{n-1}\setminus\cI) \perp\cI}\)
\prod_{s=1}^{m_{\cI}} b_{|\cI_s|}\)
\Mv_{|\cI|+1}(\{\hbbgam_s\};\tau_2,\beta),
\label{refsolRn-simple-new}
\ee		
where $\delta(x_\cI)$ was defined in \eqref{def-delta}, $\Gamma_{\ell\perp\cI}$ below \eqref{rel-delperp},
$\cI_s$ are the subsets \eqref{subsets} induced by $\cI$, $m_{\cI}$ is the number of these subsets,
$\hbbgam_s=\sum_{i\in\cI_s}\hgam_i$ are their total charges, and
\be 
b_{n-1}=\frac{2^n(2^n-1)}{n!}\, B_n
\label{defbn}
\ee
are the Taylor series coefficients of $\tanh(x)$, with $B_n$ being the Bernoulli number.
\end{theorem}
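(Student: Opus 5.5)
The plan is to start from the tree formula \eqref{refsolRn} and substitute $\Eprf_{v_0}=\Er_{v_0}-\Efrf_{v_0}$. I then expand the root factor $\Er_{v_0}=\Phi^E_{n_{v_0}-1}$ via the relation \eqref{expPhiE-mod}, which holds everywhere by construction of our principal-value extension. Since $\{\bfv_\ell\}$ are the relevant vectors, that relation writes $\Phi^E_{n-1}(\{\bfv_\ell\};\bfx)$ as a sum over subsets $\cI\subseteq\Zv_{n-1}$. Each term is a product of two factors. One is $\hPhi^M_{|\cI|}$ built from $\{\bfv_\ell\}_{\ell\in\cI}$. The other is a product of sign functions, together with Kronecker-type boundary pieces, of the orthogonally projected quantities $\Gamma_{\ell\perp\cI}$, $\ell\notin\cI$.

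The key input is Lemma \ref{lemma1} together with \eqref{rel-bfvperp} and \eqref{def-tbfvs}. By these, the $\hPhi^M$ factor depends only on the total charges $\hbbgam_s$ and equals $\Mv_{|\cI|+1}(\{\hbbgam_s\})$. The sign factors factorize over the blocks $\cI_s$, since $\Gamma_{\ell\perp\cI}$ is a positive multiple of $\Gamma^{(s)}_\ell$. The shift $\beta\bftet$ drops out of these block factors because $\bftet$ is orthogonal to the relevant projections, or else it is absorbed in the second argument of $\hPhi^M$. Via \eqref{ident-sumset}, the expansion of each $\Er_v$ thus becomes a sum over depth-2 trees: the root carries $\Mv$, and the children carry products of block sign functions of the form of $\Efrf$. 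I do the same for $\Efrf_{v_0}$ using the tree form \eqref{Efrf-tree1}.

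Next I insert everything into the Schr\"oder sum and compose the trees, in the spirit of \eqref{compos-ker}. The root $\Mv$ factor attached to a partition of the charges into blocks $\hbbgam_s$ then multiplies a kernel, independent of $\tau_2$, for each block. Each such kernel is itself a signed sum over Schr\"oder trees, whose vertex weights are products of $\Efrf$ and sign functions. The $-\Efrf_{v_0}$ subtraction precisely cancels the contributions in which the root block structure is trivial, that is, where no $\hPhi^M$ with nontrivial argument survives. So the claim reduces to a purely combinatorial identity, applied block by block, for the kernel of a block of $k$ charges. Away from the degenerate loci this kernel must vanish for $k>1$. On the loci it must equal $b_{k-1}$ times $\delta$ of all the projected $\Gamma$'s of the block; these are exactly the factors $\delta(\Gamma_{(\Zv_{n-1}\setminus\cI)\perp\cI})$.

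The main obstacle is this combinatorial identity. For generic charges, so with no deltas, I would prove it by generating functions: the alternating tree sum of products of signs collapses, as in \cite{Alexandrov:2024jnu}. On the loci where some $\Gamma$'s vanish, the factors $e_m$ (Taylor coefficients of $\mathrm{Arctanh}(x)/x$) appear from $\Efrf$. Summing over Schr\"oder trees with weight $(-1)^{n_T-1}$ amounts to compositional inversion of the series $x\,\mathrm{Arctanh}(x)/x=\mathrm{Arctanh}(x)$, which yields $\tanh$. That inversion is why the $b_k$ appear. I would verify this with the recursion for Schr\"oder trees, $f=x-\sum_{k\ge2}(\ldots)f^k$, using that $\tanh$ is the inverse of $\mathrm{Arctanh}$, and I would check $n=2,3$ explicitly first. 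A further delicate point is the interplay between the boundary values of $\hPhi^M$, as fixed by the principal-value extension, and the $\delta$-terms of $\Efrf$. I would control it by working throughout with the relation \eqref{expPhiE-mod} exactly, not with limits.
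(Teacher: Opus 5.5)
Your architecture matches the paper's: expand the root factor $\Er_{v_0}$ by \eqref{expPhiE-mod}, use Lemma~\ref{lemma1} to make the $\hPhi^M$ factor equal to $\Mv_{|\cI|+1}(\{\hbbgam_s\})$ and the remaining factors products over blocks, recombine the trees, and get the $b_k$ by inverting $\mathrm{Arctanh}$. However, you have misidentified where the Kronecker terms come from, and one step of your plan is false as stated. With the principal-value extension, \eqref{expPhiE-mod} contains no Kronecker pieces. Both the $\cI=\emptyset$ term and the depth-one vertices in \eqref{PhiEPhiM-tree} carry pure sign products $S^{(0)}_v$ with $\sgn(0)=0$, not $\Efrf$-type weights. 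So $-\Efrf_{v_0}$ does \emph{not} cancel the trivial-block contribution. It leaves $-S^{(\delta)}_{v_0}$, where $S^{(\delta)}_v:=\Efrf_v-S^{(0)}_v$, and this is exactly the source of the $\cI=\emptyset$ term $b_n\,\delta(\Gamma_{\Zv_{n-1}})$ in the theorem. Check $n=3$ on the locus $\Gamma_1=\Gamma_2=0$. Both $\Efrf_2$ factors vanish (since $e_1=0$) and all signs in \eqref{expPhiE-mod} vanish, so $\scRrf_3=\Mv_3-\Efrf_3=\Mv_3-\frac13$. The $-\frac13=b_3$ is precisely what your claimed cancellation would discard. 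Two smaller points: a block of $k$ charges carries $b_k$, not $b_{k-1}$ (so $b_1=1$ for singletons). And the $\beta\bftet$ shift drops out only through the choice of the second argument of $\hPhi^M$, since $\bftet$ is not orthogonal to $\bfv_{\ell\perp\cI}$.

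The more serious gap is that the combinatorial identity you call the main obstacle is left unproved, and the route you sketch does not close it. The missing idea is a recombination. Fix a composite tree. Each depth-one vertex either comes from the expansion of $\Er_{v_0}$, with weight $S^{(0)}_v$ and an extra sign because the original Schr\"oder tree has one fewer vertex, or was already a vertex of that tree, with weight $\Efrf_v$. Summing over the two origins gives $\prod_{v\in V_T^{(1)}}(\Efrf_v-S^{(0)}_v)=\prod_{v\in V_T^{(1)}}S^{(\delta)}_v$, as in \eqref{refsolRn-2}. This makes off-locus vanishing immediate, with no generating-function argument needed. It also reduces the whole theorem to the single identity \eqref{propSS}. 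That identity is proved by inserting the depth-$\leq 2$ expansion \eqref{Slin-plus-tree} of $S^{(\delta)}$ and repeating the same recombination level by level, until every vertex carries $e_{n_v-1}\delta_v$. Then $\prod_v\delta_v=\delta(\Gamma_{\Zv_{n-1}})$, which is what kills configurations where only some of the $\Gamma$'s of a block vanish. Only at that point does compositional inversion, \eqref{expr-bn} with weight $(-1)^{n_T}$ (not $(-1)^{n_T-1}$), give $b_n$. You propose to apply the inversion argument directly to vertex weights that mix $e$'s and sign functions. That does not show that these partial-locus contributions cancel.
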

\endgroup
\begin{proof}
Since $p_i$ are collinear, one can apply Lemma \ref{lemma1}.
Then following the same reasoning as in \S\ref{subsec-useful}, the $\delta$-factor in \eqref{refsolRn-simple-new}
factorizes into a product of factors depending only on the subsets of charges labeled by $\cI_s$.
Therefore, the statement of the theorem is equivalent to (cf. \eqref{Efrf-tree2})
\be
\scRrf_n =
b_{n}\,\delta(\Gamma_{\Zv_{n-1}})
+\sum_{T\in \IT^{\rm S,\leq 2}_n} 
\Mv_{v_0}
\prod_{v\in V_T\setminus\{v_0\}} 
b_{n_v}\,\delta_{v}.
\label{refsolRn-simple-new2}
\ee	
It is this equality that will be proven below.

To prove this equality, let us apply the relation \eqref{expPhiE-mod} between the generalized error functions to \eqref{Erefsim}.
It is easy to see that, in the case of collinear charges, the result 
can also be written in the form similar to \eqref{refsolRn-simple-new2}, namely, 
\be
\Er_n=
\sgn(\Gamma_{\Zv_{n-1}})
+\sum_{T\in \IT^{\rm S,\leq 2}_n} 
\Mv_{v_0}
\prod_{v\in V_T\setminus\{v_0\}} S^{(0)}_v . 
\label{PhiEPhiM-tree}
\ee
Substituting it into \eqref{refsolRn} and recombining different trees, one obtains 
\be
\scRrf_n = \sum_{T\in\IT_n^{\rm S}}(-1)^{n_T} 
\(S^{(\delta)}_{v_0} \prod_{v\in V_T\setminus\{v_0\} }\Efrf_v
-\Mv_{v_0}
\prod_{v_1\in V_T^{(1)} }S^{(\delta)}_{v_1}
\!\!\!\prod_{v\in V_T \setminus (V_T^{(1)}\cup \{v_0\})}\!\!\!  \Efrf_v\),
\label{refsolRn-2}
\ee
where $V_T^{(1)}$ is the set of vertices of depth 1 and $S^{(\delta)}_v=\Efrf_v-S^{(0)}_v$.
Now it is immediate to see that the statement of the theorem \eqref{refsolRn-simple-new2} follows from the following 
\begin{lemma}
	\be 
	\sum_{T\in\IT_n^{\rm S}}(-1)^{n_T}S^{(\delta)}_{v_0} 
	\prod_{v\in V_T \setminus \{v_0\})}\Efrf_v=b_n\,\delta(\Gamma_{\Zv_{n-1}}).
	\label{propSS}
	\ee
\end{lemma}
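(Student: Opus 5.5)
The plan is to reduce \eqref{propSS} to a one-variable generating-function identity, with the Kronecker symbols controlled by Lemma \ref{lemma1}. Throughout I use the tree representation \eqref{Efrf-tree2}. Subtracting its first term gives
\be
S^{(\delta)}_v=\sum_{T'\in \IT^{\rm S,\leq 2}_{n_v}} e_{n_{v'_0}-1}\,\delta_{v'_0}\prod_{v'\in V_{T'}\setminus\{v'_0\}}S^{(0)}_{v'} .
\ee
So $S^{(\delta)}_v$ is non-zero only if at least one $\Gamma_\ell$ built from the children of $v$ vanishes. The whole left-hand side of \eqref{propSS} is therefore a finite combination of products of Kronecker symbols and signs. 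I would organize it according to the set $\cK\subseteq\Zv_{n-1}$ of indices with $\Gamma_\ell=0$ (after orthogonalization, using \eqref{rel-delperp}). The lemma then splits into two claims: the left-hand side equals $b_n$ when $\cK=\Zv_{n-1}$, and it vanishes otherwise.

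\textbf{Step 1 (all $\Gamma_\ell=0$).} On this locus every $S^{(0)}_v$ with $n_v\geq 2$ vanishes. Hence $\Efrf_v=S^{(\delta)}_v=e_{n_v-1}$, and the left-hand side becomes $\sum_{T\in\IT^{\rm S}_n}(-1)^{n_T}\prod_{v\in V_T}e_{n_v-1}$. Put $A(x)=\sum_{k\geq1}e_{k-1}x^k=\mbox{Arctanh}(x)$ and $G(x)=x+\sum_{n\geq2}g_nx^n$, where $g_n$ is the signed Schr\"oder-tree sum. Decomposing a tree at its root gives $G=x-\sum_{k\geq2}e_{k-1}G^k$, that is, $A(G(x))=x$, so $G=\tanh x$. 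The coefficient $g_n$ is then the Taylor coefficient of $\tanh$, which is $b_n$ in the indexing of \eqref{propSS}; I would double-check the index shift against \eqref{defbn}.

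\textbf{Step 2 (some $\Gamma_k\neq0$).} Here I want to show that the left-hand side vanishes, and I would argue by induction on $n$. The key tool is Lemma \ref{lemma1}. Once the Kronecker symbols on a subset $\cI$ are imposed, each remaining $\Gamma_\ell$ equals a positive multiple of the $\Gamma^{(s)}_\ell$ of the block $\cI_s$ containing it. Consequently every weight in the sum factorizes into a factor depending on block totals times factors depending on the individual blocks. Using this, I would resum, for a fixed root structure, all subtrees hanging below the root. Write $\Efrf$ in the form \eqref{Efrf-new} and collect, for each child block, the signed tree sums; by the induction hypothesis and Step 1, these collapse to $\delta$-terms weighted by $b$'s plus pure sign terms. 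The claim to establish is then a telescoping cancellation. A tree whose root separates two neighbouring blocks across a non-vanishing $\Gamma_k$ should be cancelled by the tree obtained by merging those blocks into one child. The point is that $S^{(\delta)}$ at the root requires a vanishing $\Gamma$, whereas a non-vanishing $\Gamma_k$ enters only through $\sgn(\Gamma_k)$, identically on both sides. The opposite parity of $n_T$ in the two trees then supplies the relative sign.

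I expect Step 2 to be the main obstacle. Making the pairing between merged and split trees precise requires careful bookkeeping of how the subsets $\cI$ at different depths induce partitions of the kind in \eqref{subsets}. It also requires checking that the orthogonalized $\Gamma_{\ell\perp\cI}$ appearing at different levels agree, which is where the collinearity assumption and Lemma \ref{lemma1} are used. If a direct pairing is too messy, my fallback is to treat each vanishing pattern $\cK$ as giving a multivariate generating function. In that form, factorization via Lemma \ref{lemma1} turns the tree sum into compositions of $A$ and its inverse. For $\cK\neq\Zv_{n-1}$, each non-vanishing $\Gamma_k$ then contributes a factor of the form $\sgn(\Gamma_k)\bigl(A(G)-x\bigr)=0$.
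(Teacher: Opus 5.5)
Your Step 1 is correct. It uses the fact that, for collinear charges, $\Gamma_\ell=0$ for all $\ell$ forces all $\gamma_{ij}=0$, so every vertex weight reduces to $e_{n_v-1}$; your equation $\mbox{Arctanh}(G)=x$ is then just \eqref{expr-bn}. The gap is Step 2, which carries the whole content of the lemma and is only sketched. There are two concrete problems.

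First, the subtrees hanging below the root carry $\Efrf$ at their tops, not $S^{(\delta)}$. Your induction hypothesis is a statement about trees with $S^{(\delta)}$ at the root, so it does not describe those child-block sums as you intend.

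Second, the cancellation you propose is not between whole trees. When you merge a group of root children into a new child $v_1$, that child carries $\Efrf_{v_1}=S^{(0)}_{v_1}+S^{(\delta)}_{v_1}$, and only the part $S^{(0)}_{v_1}$ cancels against the split term. The leftover $S^{(\delta)}_{v_1}$ survives; it can be non-zero only when some $\Gamma$ inside the group vanishes, which is exactly when the split term is zero. This leftover must be cancelled one level further down, and so on. Such leftovers are what build up $b_n$ on the full locus, so a pairing acting only at the root cannot close the argument. The fallback ``$\sgn(\Gamma_k)(A(G)-x)=0$'' is likewise not substantiated.

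What is missing is the level-by-level recombination used in the paper. Substitute \eqref{Slin-plus-tree} for $S^{(\delta)}_{v_0}$. This grafts onto the root of $T$ a depth-$\le 2$ tree carrying $e_{n_{v_0}-1}\delta_{v_0}$ at its root and $S^{(0)}$ at its depth-one vertices; this is where Lemma \ref{lemma1} enters. In the resulting trees, each depth-one vertex $v_1$ comes either from $T$, with weight $\Efrf_{v_1}$, or from the grafted tree, with weight $-S^{(0)}_{v_1}$; the minus sign appears because grafting adds a vertex. Together these give $S^{(\delta)}_{v_1}$.

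So the left-hand side equals the same tree sum with $e_{n_{v_0}-1}\delta_{v_0}$ at the root and $S^{(\delta)}$ pushed to depth one. Repeating this at each depth yields $\sum_{T\in\IT^{\rm S}_n}(-1)^{n_T}\prod_{v\in V_T}e_{n_v-1}\delta_v$. Since $\prod_{v\in V_T}\delta_v=\delta(\Gamma_{\Zv_{n-1}})$, this equals $b_n\,\delta(\Gamma_{\Zv_{n-1}})$ by \eqref{expr-bn}.

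Equivalently, let $\mathbf E,\mathbf D,\mathbf S$ be the sequences with components $\Efrf_k$, $e_{k-1}\delta_k$, $S^{(0)}_k$ for $k\ge 2$ and $1$ for $k=1$. Then \eqref{Efrf-tree1} says $\mathbf E=\mathbf D\circ\mathbf S$ under planar tree composition. The signed Schr\"oder sum of $\Efrf$'s is $\mathbf E^{-1}$, so the left-hand side is $(\mathbf S-\mathbf E)\circ\mathbf E^{-1}=\mathbf S\circ\mathbf S^{-1}\circ\mathbf D^{-1}-\mathrm{id}=\mathbf D^{-1}-\mathrm{id}$.

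This argument proves your Steps 1 and 2 at once. No case split by vanishing pattern is needed, because support on $\Gamma_{\Zv_{n-1}}=0$ is manifest from the overall Kronecker symbol.
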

\begin{proof}
	From \eqref{Efrf-tree2}, it follows that
	\be
	S^{(\delta)}_n
	=
	\sum_{T\in \IT^{\rm S,\leq 2}_n} 
	e_{n_{v_0}-1}\delta_{v_0} \prod_{v\in V_T\setminus\{v_0\}} S^{(0)}_v,
	\label{Slin-plus-tree}
	\ee
	which shows that this function is non-vanishing only on special charge configurations where at least one
	of the Kronecker symbols is non-zero.
	Substituting this into the l.h.s. of \eqref{propSS} and recombining the trees, one obtains (cf. \eqref{refsolRn-2})
	\be 
	\sum_{T\in\IT_n^{\rm S}}(-1)^{n_T}e_{n_{v_0}-1}\delta_{v_0} \prod_{v_1\in V_T^{(1)} }S^{(\delta)}_{v_1}
	\prod_{v\in V_T \setminus (V_T^{(1)}\cup \{v_0\})}\Efrf_v.
	\ee 
	Now we can again substitute \eqref{Slin-plus-tree} at vertices of depth 1. 
	This replaces the factors $S^{(\delta)}_{v_1}$ by $e_{n_{v_1}-1}\delta_{v_1}$ and 
	$\Efrf_v$ for $v\in V_T^{(2)}$ by $S^{(\delta)}_{v}$.	
	Proceeding in this way, in $d_T$ steps, where $d_T$ is the depth of tree, on arrives at 
	the following expression 
	\be  
	\sum_{T\in\IT_n^{\rm S}}(-1)^{n_T}\prod_{v\in V_T }e_{n_{v}-1}\delta_{v}.
	\ee
	Taking into account that 
	\be 
	\prod_{v\in V_T }\delta_{v}=\delta(\Gamma_{\Zv_{n-1}}),
	\ee 
	the Kronecker symbols factorize and one remains with a sum over Schr\"oder trees with vertices weighted by $e_{n_{v}-1}$.
    According to \eqref{expr-bn}, it is equal to the coefficient $b_n$. As a result, we arrive at the expression on the r.h.s. of \eqref{propSS},
	which proves the Lemma.
\end{proof}
This completes the proof of the theorem.
\end{proof}

Theorem \ref{thm-simple-ref} shows that the kernel $\scRrf_n$ coincides with $\hPhi^M_{n-1}$ up to contributions supported only 
on the discontinuity loci of this function.
Since originally it was not defined there and we had to introduce its extension to these loci anyway,
this suggests that it might be possible to absorb the additional contributions by modifying the extension. 
In fact, the simplest possibility is to take the sum over subsets in \eqref{refsolRn-simple-new} 
as such a new extension. This would absorb all the additional contributions by construction.
However, it appears to be {\it ad hoc} and it is not clear why the new extension should be better than the previous one.

Instead, in appendix \ref{ap-generr} we propose another modification $\hPhi_n^\Msf$, which is distinguished by the property that
it is exponentially suppressed at large $\xbbm$ (and $\bbbm=0$).
It satisfies a modified relation to $\Phi^E_n$ (\eqref{expPhiE-mod3-hEM} vs. \eqref{expPhiE-mod-EhM})
and is related to $\hPhi_n^M$ by \eqref{expPhiE-mod-inv3}.
Although the r.h.s. of this formula looks similar to \eqref{refsolRn-simple-new} upon replacement 
$(\vbbm_i,\xbbm,\bbbm)\mapsto(\bfv_i,\sqrt{2\tau_2}(\bfq+\beta\bftet),\sqrt{2\tau_2}\beta\bftet)$,
they differ by the numerical coefficients which weigh the contribution of each subset.
Nevertheless, remarkably, it turns out that they become equal after a proper symmetrization!
Namely, defining the function $\euM_n$ from $\hPhi^\Msf_{n-1}$ in the same way as in \eqref{vMdef} $\Mv_n$ was defined from $\hPhi^M_{n-1}$, 
we can formulate the following
\begingroup
\renewcommand{\thetheorem}{1b}
\begin{theorem}
\label{thm-simple-ref2}
\addtocounter{theorem}{-1}
For collinear D4-brane charges, one has
\be
\Sym \Bigl\{ (-y)^{\sum_{i<j} \gamma_{ij}}\, \scRrf_n(\bfhgam;\tau_2,\beta)
\Bigr\}=
\Sym \Bigl\{ (-y)^{\sum_{i<j} \gamma_{ij}}\,\euM_n(\bfhgam;\tau_2,\beta)
\Bigr\}.
\label{Rn-simple-new}
\ee
\end{theorem}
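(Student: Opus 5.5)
Starting from Theorem \ref{thm-simple-ref}, I would reduce the statement to a purely combinatorial identity on the extended error functions. That theorem writes $\scRrf_n$ as a sum over subsets $\cI\subseteq\Zv_{n-1}$, i.e.\ over ordered partitions of $n$ via \eqref{ident-sumset}. Each subset carries the weight $\delta(\Gamma_{(\Zv_{n-1}\setminus\cI)\perp\cI})\prod_s b_{|\cI_s|}$ multiplying $\Mv_{m_\cI}(\{\hbbgam_s\})$.

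On the other side, I would substitute the relation \eqref{expPhiE-mod-inv3}, evaluated at $(\vbbm_i,\xbbm,\bbbm)\mapsto(\bfv_i,\sqrt{2\tau_2}(\bfq+\beta\bftet),\sqrt{2\tau_2}\beta\bftet)$. This expresses $\euM_n$ as a similar sum over subsets $\cI$ with weights $\delta(\Gamma_{(\Zv_{n-1}\setminus\cI)\perp\cI})\,c_\cI$ multiplying $\Mv_{m_\cI}(\{\hbbgam_s\})$, where $c_\cI$ are the numerical coefficients from the appendix. I would use Lemma \ref{lemma1} to bring the Kronecker deltas to factorized form, so they split into factors $\delta_v$ depending only on the charges within each block $\cI_s$. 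In the collinear case, $\Mv$ depends only on the block totals $\hbbgam_s$. The difference of the two sides is then a sum over ordered partitions $(n_1,\dots,n_m)$, with coefficient $\prod_s b_{n_s}-c_{(n_s)}$, times $\prod_s\delta(\text{block } s)\,\Mv_m(\{\hbbgam_s\})$.

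The key point is that the symmetrization, together with the factor $(-y)^{\sum_{i<j}\gamma_{ij}}$, lets us average over orderings inside each block. Inside a block all $\Gamma^{(s)}_\ell$ vanish, so the internal Dirac products satisfy $\sum_{i<j\in\cI_s}\gamma_{ij}=0$. I would verify this by checking that the vanishing of all $\Gamma^{(s)}_\ell$ in the collinear case forces every $\gamma_{ij}$ within the block to vanish, i.e.\ $N_iq_j=N_jq_i$ projected on $J$. The cross-block part of the exponent depends only on the totals. Hence, after summing over permutations, the identity reduces to equality of the symmetrized coefficients $\Sym$ over orderings of $(n_1,\dots,n_m)$. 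More precisely, one compares sums over all orderings of the same multiset of block sizes with the block charges permuted, because $\Mv_m$ is not symmetric while the product $\prod_s b_{n_s}$ is. I would therefore aim to show that the appendix coefficients $c_\cI$ satisfy
\be
\sum_{\sigma\in S_m} c_{(n_{\sigma(1)},\dots,n_{\sigma(m)})}=m!\prod_s b_{n_s},
\ee
or a finer statement if the permutation acts simultaneously on the $\Mv$ arguments.

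The main obstacle is that last step: the coefficients in \eqref{expPhiE-mod-inv3} presumably come from iterating the definition of $\hPhi^\Msf$ and depend on the order of blocks, e.g.\ through boundary effects of the leftmost or rightmost subsets. I would first compute them explicitly for $n=2,3$ to guess the pattern. I would then prove the averaged identity using generating functions, namely the $\tanh$/$\mbox{Arctanh}$ Taylor relations \eqref{expr-bn} between $b_n$ and $e_n$ from appendix \ref{ap-coef}. The aim is to show that both sides arise from the same exponential generating function once order is forgotten.
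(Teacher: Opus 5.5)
Your reduction is the same as the paper's up to a point. You apply \eqref{expPhiE-mod-inv3}, use Lemma \ref{lemma1}, and note that the vanishing of all $\Gamma^{(s)}_\ell$ forces every internal $\gamma_{ij}$ to vanish. You correctly conclude that, on the support of the Kronecker symbols, the power of $-y$, the Kronecker symbols and $\Mv_m(\{\hbbgam_s\})$ are invariant under permutations of charges \emph{inside} each block.

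The gap is in how you then treat the coefficients. In \eqref{expPhiE-mod-inv3} they are not numbers attached to an ordered partition $(n_1,\dots,n_m)$. They are $\bsf_{n-|\cI|}(\bfv_{(\Zv_{n-1}\setminus\cI)\perp\cI})$, built through \eqref{def-bsf} from $\esf_k=\Phi^E_k(\cdot\,;0)$. These depend on the angles between the vectors $\bfv_\ell$, hence on the individual ranks $r_i$. They also do not depend on the order of the blocks. By Lemma \ref{lemma1} the projected vectors split into mutually orthogonal groups proportional to $\{\bfv^{(s)}_\ell\}_{\ell\in\cI'_s}$. So \eqref{Phi-orth} and scale invariance give $\bsf_{n-|\cI|}(\bfv_{(\Zv_{n-1}\setminus\cI)\perp\cI})=\prod_s\bsf_{|\cI_s|}(\{\bfv^{(s)}_\ell\})$, with no boundary effects.

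This factor is the only ingredient not invariant under intra-block permutations. The symmetrization therefore lets you replace each $\bsf_{|\cI_s|}$ by its symmetrized version. Then \eqref{bsym}, $\Sym\bsf_n(\{\bfv_i\})=b_n$, turns the expression term by term into \eqref{refsolRn-simple-new}. This factorization over blocks and the identity \eqref{bsym} are the missing ingredients.

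The identity you propose instead, an average over orderings of the blocks, cannot work.
\begin{itemize}
\item Permuting whole blocks changes $\Mv_m(\{\hbbgam_s\})$ and the cross-block power of $-y$, so it cannot be reduced to a statement about coefficients alone.
\item The identity is false already for $m=1$, where there is nothing to permute. For $n=3$ and $\cI=\emptyset$ the coefficient is $\bsf_3(\{\bfv_1,\bfv_2\})=-\esf_2(\{\bfv_1,\bfv_2\})=-\frac{2}{\pi}\arcsin\sqrt{r_1r_3/((r_1+r_2)(r_2+r_3))}$. This equals $b_3=-\frac13$ for equal ranks, but is about $-0.216$ for $(r_1,r_2,r_3)=(1,2,1)$. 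Only its average over the orderings of the three charges within the block gives $-\frac13$; this is \eqref{esym}.
\end{itemize}
The $\tanh$/$\mbox{Arctanh}$ relation \eqref{expr-bn} does enter, but only through the proof of \eqref{bsym} via \eqref{def-bsf2} and \eqref{esym}. That is, it enters through averaging over the charges inside a block, not over the order of blocks.
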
 
\endgroup
\begin{proof}
Due to \eqref{expPhiE-mod-inv3}, the r.h.s. of \eqref{Rn-simple-new} takes the form
\be 
\Sym \Bigl\{ (-y)^{\sum_{i<j}\gamma_{ij} }
	\!\!\!\sum_{\cI\subseteq \Zv_{n-1}}\!\!\!
	\bsf_{n-|\cI|}(\bfv_{(\Zv_{n-1}\setminus\cI)\perp\cI})\,
	\delta\bigl(\Gamma_{(\Zv_{n-1}\setminus\cI) \perp\cI}\bigr)
	\,	\Mv_{|\cI|+1}(\{\hbbgam_s\};\tau_2,\beta)
	\Bigr\}.
	\label{expr-SymM}
\ee 
According to Lemma \ref{lemma1}, the vectors $\bfv_{\Zv_{n-1}\setminus\cI\perp\cI}$ split into $m_\cI$ mutually orthogonal subsets 
$\{\lambda_s\bfv^{(s)}_\ell\}_{\ell\in\cI'_s}$ \eqref{rel-bfvperp}. 
Due to this, 
\be 
\delta\bigl(\Gamma_{(\Zv_{n-1}\setminus\cI) \perp\cI}\bigr) = 
\prod_{s=1}^{m_\cI}\delta\bigl(\Gamma^{(s)}_{\cI'_s}\bigr),
\label{delta-Thm1p}
\ee 
while the property \eqref{Phi-orth} and the independence of 
$\Phi^E_n$ on the overall scaling of the vectors $\bfv_\ell$ ensure that 
\be
\bsf_{n-|\cI|}(\bfv_{(\Zv_{n-1}\setminus\cI) \perp\cI})
=\prod_{s=1}^{m_\cI}\bsf_{|\cI_s|}\bigl(\{\bfv^{(s)}_\ell\}\bigr).
\ee

Next, we note that $n-1$ conditions $\Gamma_\ell=0$ imply that {\it all} $\gamma_{ij}$ vanish.
This follows from the fact that in the collinear case the lattice \eqref{latp} is $n-1$-dimensional 
and the vectors $\bfv_\ell$ are all linearly independent. Therefore, the vectors $\bfv_{ij}$
can be expressed as their linear combinations.
Due to this observation, each factor in \eqref{delta-Thm1p} can be written as 
\be 
\delta\bigl(\Gamma^{(s)}_{\cI'_s}\bigr)=\prod_{i<j\atop i,j\in \cI_s}\delta_{\gamma_{ij}}.
\label{deltas-Thm1p}
\ee 
As a result, the power of $y$ and 
the function $\Mv_{|\cI|+1}$ depend only on the total charges of the subsets and hence are not affected by 
permutations acting inside each subset. Because of \eqref{deltas-Thm1p}, such permutations also do not affect \eqref{delta-Thm1p}.

Thus, the only factor in \eqref{expr-SymM} which is affected by the permutations inside a subset $\cI_s$ is 
$\bsf_{|\cI_s|}(\{\bfv^{(s)}_\ell\})$. Since these permutations act independently on each subset, this allows to 
replace the coefficients $\bsf_n$ by their symmetrized versions.
But the latter, due to \eqref{bsym}, coincide with the coefficients $b_n$.
After this replacement, the expression \eqref{expr-SymM} becomes identical to the l.h.s. of \eqref{Rn-simple-new}
after using \eqref{refsolRn-simple-new}. This completes the proof.
\end{proof}

The l.h.s. of \eqref{Rn-simple-new} is exactly what appears in the coefficient \eqref{Rirf-to-rmRrf} determining the modular anomaly.
Thus, according to Theorem \ref{thm-simple-ref2}, in the collinear case, 
the kernel of the modular anomaly coincides with the function $\euM_n(\bfhgam;\tau_2,\beta)$,
given by our second extension of the complementary generalized error function, without any additional contributions.

\section{Variation of the modular completion}
\label{sec-varcompl}

Now we turn to VW theory on a Hirzebruch surface and consider the solution presented in \S\ref{subsec-genfunVW}.
In this section we will be interested in the modular completion $\whh_{N,\mu,J}$ \eqref{complFBJ} of the generating series of refined VW invariants.
Similarly to the generating series, the completion carries a dependence on the polarization vector $J$.
For $h_{N,\mu,J}$, the dependence is piece-wise constant and captured by the universal wall-crossing formulae  
\cite{ks,Joyce:2008pc}. In contrast, $\whh_{N,\mu,J}$ are smooth functions of $J$, as follows from \eqref{kerhg-tree}.
Nevertheless, one can ask whether their dependence on $J$ is also governed by some universal structure.
As we will show, the answer to this question appears to be affirmative.

More precisely, similar to the wall-crossing equations, we express the completion of rank $N$ evaluated at polarization $J'$ in terms of
the completions of ranks $N_i\leq N$ evaluated at $J$.
The result is given by the following

\begin{theorem}
\label{th-compl}
\be
\whh_{N,\mu,J'}(\tau,z)-\whh_{N,\mu,J}=\Theta_{N,\mu}\Bigl(\tau,z;\{\wheuF_n(J,J')\},\{\whh_{N',\mu,J}\}\Bigr),
\label{complFBJJ'}
\ee
where $\Theta_{N,\mu}$ is defined in \eqref{def-Theta} and
\be 
\wheuF_n(\bfhgam;J,J')= \sum_{T\in \IT^{\rm S}_n} (-1)^{n_T-1}
\Bigl(\Ev_{v_0}(J')	-\Ev_{v_0}(J)\Bigr)
\prod_{v\in V_T\setminus\{v_0\}}\Ev_v(J).
\label{ansPhin}
\ee
\end{theorem}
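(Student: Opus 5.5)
The plan is to work entirely at the level of kernels, using the composition property \eqref{compTheta}. By \eqref{complFBJ}, both completions are contractions of the same stack functions $H^S$:
\[
\whh_{N,\mu,J'}-\whh_{N,\mu,J}=\Theta_{N,\mu}\bigl(\{\whFv_n(J')-\whFv_n(J)\},\{H^S\}\bigr).
\]
The right-hand side of \eqref{complFBJJ'} is rewritten by inserting \eqref{complFBJ} for $\whh_{N',\mu,J}$ and applying \eqref{compTheta}–\eqref{compos-ker}. It then equals $\Theta_{N,\mu}(\{\cF^{\rm tot}_n\},\{H^S\})$ with
\[
\cF^{\rm tot}_n(\bfhgam)=\sum_{m=1}^n\sum_{\sum_s n_s=n}\wheuF_m(\{\hbbgam_s\};J,J')\prod_{s=1}^m\whFv_{n_s}(\bfhgam^{(s)};J).
\]
It therefore suffices to prove the kernel identity
\be
\whFv_n(\bfhgam;J')-\whFv_n(\bfhgam;J)=\sum_{m=1}^n\sum_{\sum_s n_s=n}\wheuF_m(\{\hbbgam_s\};J,J')\prod_{s=1}^m\whFv_{n_s}(\bfhgam^{(s)};J).
\label{plan-kerid}
\ee
We use the conventions $\wheuF_1=0$ and $\whFv_1=1$, which are consistent with the $m=1$ term dropping out.

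The key structural observation is the tree form \eqref{kerhg-tree}. Summing over depth-2 planar trees is the same as choosing an ordered partition of $n$, as in \eqref{ident-sumset}. The root carries $\Ev_m(J)$ of the grouped charges $\hbbgam_s$, and each block $s$ carries the weight $\sgn(-\Bv_{\Ch'})$, which is \emph{independent of $J$}. So I would write
\[
\whFv_n(J)=\sum_{\rm partitions}\Ev_m(\{\hbbgam_s\};J)\,\prod_s \sigma_{n_s}(\bfhgam^{(s)}),\qquad \sigma_k:=\sgn(-\Bv_{\Zv_{k-1}}),\quad \sigma_1=1.
\]
Both sides of \eqref{plan-kerid} are then "plethystic" compositions over ordered partitions. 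This is an associative composition on sequences of kernels, the operation already encoded in \eqref{compos-ker}; denote it $\circ$. In this notation $\whFv(J)=\Ev(J)\circ\sigma$, and \eqref{plan-kerid} reads
\[
(\Ev(J')-\Ev(J))\circ\sigma=\wheuF\circ\Ev(J)\circ\sigma.
\]

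It is enough to show $\Ev(J')-\Ev(J)=\wheuF\circ\Ev(J)$, which needs no information about $\sigma$. The sum over Schr\"oder trees with sign $(-1)^{n_T-1}$ and weights $\Ev_v(J)$ at non-root vertices is precisely the compositional inverse (with respect to $\circ$, unit $\delta_{n,1}$) of $\Ev(J)$. Indeed, with $\Ev_1=1$, write $\Ev=1+E'$; then $\Ev^{-1}=\sum_k(-E')^{\circ k}$, and iterated nesting of $E'$ produces exactly Schr\"oder trees with one sign per vertex. Hence
\[
\wheuF=(\Ev(J')-\Ev(J))\circ\Ev(J)^{-1},
\]
read with the root vertex carrying the difference, which is exactly \eqref{ansPhin}. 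Composing on the right with $\Ev(J)$ gives $\Ev(J')-\Ev(J)$, as required. To be safe I would verify this directly: substitute \eqref{ansPhin} into $\wheuF\circ\Ev(J)$, group the resulting depth-increased trees by their top part, and observe that adjacent levels cancel telescopically, by the same recombination as in the proof of \eqref{propSS}. Only the contribution with trivial lower vertices survives.

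The main obstacle is making sure the composition $\circ$ really is associative here. Associativity requires the functions $\Ev_m(\{\hbbgam_s\})$ to depend only on the grouped charges, and that holds by definition. One must also check that the refinement factors $y^{\sum\gamma_{ij}(c_1)}$ and the quadratic forms $Q_n$ split consistently under grouping; this is what \eqref{compTheta} already guarantees. A second subtlety is the $1/2^{n-1}$ normalization in \eqref{def-Theta}. It is multiplicative under grouping, since $2^{m-1}\prod_s 2^{n_s-1}=2^{n-1}$ up to the correct power, so it is absorbed by \eqref{compTheta}. Once associativity is granted, the proof reduces to the inverse-tree identity above, and the result is manifestly modular because $\Ev(J)$ and $\Ev(J')$ are modular kernels.
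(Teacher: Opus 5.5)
Your proof is correct, but it takes a different route from the paper's.

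\textbf{The paper's route.} It first inverts \eqref{complFBJ}, expressing $H^S$ as a Schr\"oder-tree contraction of the $\whh_{N',\mu,J}$; this is the compositional inverse of the full kernel $\whFv(J)$. That gives $\wheuF_n$ in the form \eqref{Phin-subs}, with $\whFv$ at every vertex. It then substitutes the depth-2 representation \eqref{kerhg-tree}. Finally, a pairing argument on trees with alternating $\Ev$ and $\sgn(-\Bv_{\Ch'(v)})$ levels (inserting two single-child vertices) shows that all $\Bv$-dependent factors cancel.

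\textbf{Your route.} You verify the kernel identity $\whFv(J')-\whFv(J)=\wheuF\circ\whFv(J)$ directly. You observe that $\whFv(J)=\Ev(J)\circ\sigma$ with $\sigma$ independent of $J$. Since $\circ$ is associative and linear in its outer slot, $\sigma$ drops out immediately. The paper's tree cancellation is in effect $\sigma\circ\sigma^{-1}=\delta$ carried out inside the trees.

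\textbf{What each buys.} Your argument makes it transparent why the surface data ($\nv$, entering through $\Bv$) are absent from \eqref{ansPhin}: any $J$-independent right factor would cancel. This is exactly the universality the paper remarks on after the theorem. The paper's argument derives $\wheuF$ rather than checking a proposed form. It also produces the stack-invariant inversion \eqref{holmod-whh} as a by-product.

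\textbf{One point to tighten.} You need the \emph{left}-inverse property $\Ev(J)^{-1}\circ\Ev(J)=\delta$. Your series $\sum_k(-E')^{\circ k}$ is not literally right. The composition is not linear in its inner slot, and $E'_1=0$ would forbid leaves as children. The recursion that actually generates Schr\"oder trees is $\Ev^{-1}=\delta-E'\circ\Ev^{-1}$, and it yields a \emph{right} inverse. There are two ways to close this:
\begin{enumerate}
\item Note that in this monoid every kernel with unit $n=1$ component has a right inverse, and therefore that right inverse is two-sided.
\item Carry out the direct check you sketch. In $\wheuF\circ\Ev(J)$, every contribution is a Schr\"oder tree $T$ with weight $\bigl(\Ev_{v_0}(J')-\Ev_{v_0}(J)\bigr)\prod_{v\neq v_0}\Ev_v(J)$. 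This weight does not depend on which subset $B$ of non-root vertices whose children are all leaves came from the $\Ev(J)$ factor. Summing the sign $(-1)^{n_T-|B|-1}$ over all such $B$ kills every tree except the corolla, which leaves $\Ev_n(J')-\Ev_n(J)$.
\end{enumerate}
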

\begin{proof}
At the first step, let us consider the expressions \eqref{complFBJ} for the modular completions $\whh_{N,\mu,J}$
as equations on the functions $H^S_{N,\mu}$. It is straightforward to solve them iteratively.
The solution can be represented as a sum over Schr\"oder trees:
\be
H^S_{N,\mu}(\tau,z)=\whTheta_{N,\mu}\Bigl(\tau,z;\{\euF^H_n\},\{\whh_{N',\mu,J}\}\Bigr),
\label{holmod-whh}
\ee
where
\be 
\euF^H_n=\sum_{T\in \IT^{\rm S}_n} (-1)^{n_T}\prod_{v\in V_T}\whFv_v.
\ee 
Indeed, substituting \eqref{holmod-whh} back into \eqref{complFBJ} and using the property \eqref{compTheta}, 
one finds that the resulting kernel vanishes unless $n=1$.

Next, we compute the difference of the completion evaluated at two polarization vectors, $J$ and $J'$.
Using again \eqref{complFBJ}, it can be written as 
\be
\whh_{N,\mu,J'}-\whh_{N,\mu,J}=\Theta_{N,\mu}\Bigl(\{\Delta\whFv_n\},\{H^S_{N',\mu}\}\Bigr),
\label{complFB-diff}
\ee
where $\Delta\whFv_n=\whFv_n(J')-\whFv_n(J)$.
Then we substitute \eqref{holmod-whh} and use the composition property \eqref{compTheta} to write the result 
in the form of \eqref{complFBJJ'}.
The kernel $\wheuF_n(J,J')$ obtained in this way is given by 
\be 
\wheuF_n(J,J')=\sum_{T\in \IT^{\rm S}_n} (-1)^{n_T-1}
\Delta\whFv_{v_0}
\prod_{v\in V_T\setminus\{v_0\}}\whFv_v.
\label{Phin-subs}
\ee 
It remains to show that upon substitution of \eqref{kerhg-tree}, all terms involving factors 
$S^\Bv_v:=\sgn\bigl(-\Bv_{\Ch'(v)}\bigr)$ cancel.
It is easy to see that the remaining terms then reproduce the desired kernel \eqref{ansPhin}.

Substituting \eqref{kerhg-tree} into \eqref{Phin-subs}, one obtains 
\be 
\wheuF_n(J,J')=\sum_{T\in \tilde\IT^{\rm r}_n} 
(\Ev_{v_0}(J')-\Ev_{v_0}(J))
\prod_{v\in V_T^{\rm even}\setminus\{v_0\}}\(-\Ev_v(J)\)\prod_{v\in V_T^{\rm odd}}S^\Bv_v,
\label{Phin-subs2}
\ee 
where $V_T^{\rm even}$ and $V_T^{\rm odd}$ are the sets of vertices (excluding leaves) of even and odd depth, respectively,
and $\tilde\IT^{\rm r}_n\subset\IT^{\rm r}_n$ is the set of rooted trees satisfying the following conditions:
i) the depth is even, ii) the root vertex has more than one child, 
and iii) one cannot have $n_v=n_{v'}=1$ if $v\in V_T^{\rm even}$ and $v'\in \Ch(v)$.
The last two conditions come from the restriction in \eqref{Phin-subs} to Schr\"oder trees.

To see the cancellations, let us a pick up an arbitrary tree $T\in \tilde\IT^{\rm r}_n$ where 
a vertex $v\in V_T^{\rm even}$ has $n_v>1$ and one of its children $v'$ also satisfies $n_{v'}>1$. 
Its contribution gets canceled by the contribution of a similar tree where 
between $v$ and $v'$ there are two additional vertices each having a single child. Indeed, the two trees have the contributions 
differing only by the overall sign. Thus, one remains with contributions of trees where 
for any two consecutive vertices at least one must have only one child.

Since the root cannot have one child, all its children have $n_v=1$. 
Furthermore, such trees with vertices of depth 2 also having only one child have already been used for cancellations
at the very first step. Thus, at depth 2 one has $n_v>1$. But then vertices of depth 3 must have $n_v=1$.
Proceeding in this way, one concludes that all vertices of even depth satisfy $n_v>1$, whereas for all vertices of odd depth $n_v=1$.
Since for such vertices $S^\Bv_v=1$, they can be safely removed and one remains with the sum over Schr\"oder trees 
exactly as in \eqref{ansPhin}.
This completes the proof.
\end{proof}

This theorem shows that the kernel of the theta series that determines the change of the modular completion under a variation of 
the polarization is entirely expressed through the functions $\Ev_n$ identical to the generalized error functions $\Phi^E_{n-1}$.
Hence the modularity of this representation is manifest. More interestingly, the kernel does not depend on the surface data
such as the null vector $\nv(S)$. (Of course, the information on the charge lattice and polarization is still there and cannot be removed.)
This hints that the result \eqref{complFBJJ'} might be valid and have exactly the same form 
in a more general context than VW theory on Hirzebruch surface.
The necessary requirement for such a generalization is that the (mock) modularity of the generating functions continues to be governed 
by the structure presented in \S\ref{sec-anomaly} beyond the large volume attractor chamber,
as it happens in VW theory.

\section{VW invariants at the walls}
\label{sec-VEwall}

Let us now consider the dependence on the polarization vector $J$ of the generating functions $h_{N,\mu,J}$ \eqref{genJhN}
of refined VW invariants.
As it should be, this dependence is piece-wise constant because $J$ enters only through the sign functions and the Kronecker symbols
in the kernel \eqref{kerg}. As soon as one of $\Gamma_\ell(J)$'s vanishes, the kernel and hence the generating function jump.
Thus, the polarization vectors for which there are charges such that at least one $\Gamma_\ell(J)=0$, 
correspond to the walls of marginal stability.\footnote{The only exception is $J\sim c_1(S)$, i.e. the canonical polarization.
	In this case, the jump takes place for two charge configurations that cancel each other.
	This cancellation is possible because the two configurations have the same power of $y$, 
	whereas this is not the case if $J\not\sim c_1(S)$.}
In practice, this means that $J$ should have integer components up to an overall rescaling.

An important point is that the generating functions \eqref{genJhN} are defined for {\it any} $J$.
In particular, this implies that they can be computed at the walls!
An indication that the invariants computed in this way are meaningful was given already in \cite{Alexandrov:2020dyy}:
VW invariants on $\IF_1$ computed at a wall were shown to correctly reproduce VW invariants on $\IP^2$ 
by means of the standard blow-up formulae \cite{Yoshioka:1996,0961.14022,Li:1998nv}.
Therefore, one can expect that they do have a mathematical meaning 
and hence it would be interesting to determine their properties.

One important property of the standard VW invariants is actually lost for the invariants at the walls. 
Recall that the generating functions $h_{N,\mu,J}$ compute rational invariants \eqref{defcref}.
To get the generating functions $\hint_{N,\mu,J}$ of integer invariants, 
one should compute the linear combinations \eqref{def-hint}.
It is easy to check in concrete examples that for $J$ corresponding to a wall, 
the coefficients of $\hint_{N,\mu,J}$ are, in general, {\it not} integers.
Thus, the invariants at the walls cannot be given by a mere counting of some objects.

Next, one can ask whether they are related to the invariants on the two sides of the wall.
It is easy to see that at rank $N=2$, they are simply given by the arithmetic average of these invariants, 
i.e. if $J_\pm$ are small rotations of $J$ to the left/right then
\be 
h_{2,\mu,J}=\hf\(h_{2,\mu,J_+}+h_{2,\mu,J_-}\).
\label{Delh2}
\ee 
But already for $N=3$, this is not true.
The main result of this section is an explicit evaluation of the deviation from the average for arbitrary rank.
Namely, we compute 
\be 
\Delta h_{N,\mu,J}=h_{N,\mu,J}-\hf\(h_{N,\mu,J_+}+h_{N,\mu,J_-}\)
\label{defDh}
\ee 
for $S=\IF_m$.

To present the result, we need to introduce several new quantities:

\begin{itemize}
\item 
First, we define a few objects related to the polarization vector $J^\alpha=(J^1,J^2)$,
which we assume, without loss of generality, to satisfy $\gcd(J^1,J^2)=1$:
\begin{itemize}
	\item 
	$ 
	J_\pm^\alpha=(J^1\pm \eps J^2, J^2\mp\eps J^1),
	\label{Jpm}
	$	
	where $\eps$ is an infinitesimal positive parameter;
	\item 
	$\vrh\in\Lambda_S$, any fixed lattice vector satisfying $J\cdot \vrh=1$;
	\item
	$J_\perp=\veps\cdot J$, where $\veps_{\alpha\beta}=\(\begin{array}{cc}0 & 1\\-1 & 0\end{array}\)$
	and the indices are raised and lowered with help of the	quadratic form $C_{\alpha\beta}$ \eqref{dataFk}.
\end{itemize}
Note the properties\footnote{Here and everywhere below $J^2=J\cdot J$, not the second component of $J$.}
\be 
\veps^T C^{-1} \veps=-C,
\qquad
J_\perp^2=-J^2,
\qquad
J_\perp\cdot J=0.
\label{prop-JJp}
\ee 
	
\item 
Second, we define a theta series analogous to \eqref{Theta-bfN} where the two-dimensional lattice $\Lambda_{\IF_m}$ is replaced by 
the one-dimensional lattice $\IZ$.
More precisely, it can be obtained from \eqref{Theta-bfN} by restricting the sum over charges
to those which have the form $q_i^\alpha=J_\perp^\alpha \qsf_i$. This gives
\be 
\vth^{\perp(\bfN)}_{\lambda,\bflam}(\tau,z;\Fv_n)=
\sum_{\qsf_i\in N_i\IZ+\lambda_i\atop \sum_{i=1}^n \qsf_i=\lambda}
\Fv_n(\bfhgam^\perp)\,
\q^{-\frac{J^2}{2}\,Q_n(\bfhgam^\perp)}\, y^{(c_1\cdot J_\perp)\sum\limits_{i<j}\gamma^\perp_{ij}},
\label{def-vthperp}
\ee 
where $\lambda\in\IZ_N$, $\lambda_i\in\IZ_{N_i}$ are residue classes, 
the quadratic form $Q_n$ is given by the same equation \eqref{defQlr-VW} as before but now evaluated on 
the set of $\hgam^\perp_i=(N_i,\qsf_i)$, and 
\be 
\gamma^\perp_{ij}=N_i \qsf_j -N_j \qsf_i.
\ee 
As in \eqref{defGammaVW}, we also define 
\be
\Gamma^\perp_\ell:=\sum_{i=1}^\ell\sum_{j=\ell+1}^n\gamma^\perp_{ij}.
\label{defGamma-perp}
\ee

\item
Finally, analogous to \eqref{def-Theta}, we define 
\be 
\hspace{-0.5cm}
\Theta^J_{N,\mu}\Bigl(\{\cF_n\},\{H_{N',\mu}\}\Bigr)=
\sum_{n=3\atop n{\rm -odd}}^N \frac{1}{2^{n-1}}\sum_{\sum_{i=1}^n N_i=N}\sum_{\bfmu}
\(\prod_{i=1}^n\delta^{(1)}_{J\cdot \bigl(\frac{\mu_i}{N_i}-\frac{\mu}{N}\bigr)}\)
\vth^{\perp(\bfN)}_{\vrh\cdot\veps\cdot\mu,\vrh\cdot\veps\cdot\bfmu}(\cF_n)
\prod_{i=1}^n H_{N_i,\mu_i},
\label{def-Theta-perp}
\ee 
where we dropped the arguments $(\tau,z)$ to avoid cluttering.
The two main differences with respect to \eqref{def-Theta}
are the presence of the Kronecker symbols imposing restrictions on the residue classes $\mu_i$ and the range of summation for $n$.
Note also that the ambiguity in the vector $\vrh$ does not affect the construction because it corresponds to the shift of 
the vector indices of the theta series \eqref{def-vthperp} that does not change its summation range.

\end{itemize}

Using these quantities, we can now formulate the following
\begin{theorem}
\label{th-wall}
If $J$ corresponds to a wall of marginal stability, then 
\be
\Delta h_{N,\mu,J}
=\Theta^J_{N,\mu}\Bigl(\{\Fp_n\},\{h_{N',\mu,J}\}\Bigr),
\label{DelhNresh}
\ee 
where the kernel is given by
\be 
\Fp_n(\bfhgam^\perp)=-\sum_{m=1}^{n}\sum_{\sum_{s=1}^{m}n_s=n}S(\Gamma^\perp_{\{j_s\}})
\prod_{s=1}^{m} b_{n_s}
\label{kerperp}
\ee 
the coefficients $b_n$ are defined in \eqref{defbn}, the function $S(\Gamma_{\cI})$ in \eqref{defSlin}, and 
$j_s=\sum_{t=1}^s n_s$, $s=1,\dots,m-1$.
\end{theorem}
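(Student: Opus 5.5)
We work entirely with the representation \eqref{genJhN}, $h_{N,\mu,J}=\Theta_{N,\mu}(\{\Fv_n(J)\},\{H^S_{N',\mu}\})$. Since $H^S_{N,\mu}$ does not depend on $J$, all the $J$-dependence sits in the kernels \eqref{kerg}. Hence
\[
\Delta h_{N,\mu,J}=\Theta_{N,\mu}\Bigl(\{\Delta\Fv_n\},\{H^S_{N',\mu}\}\Bigr),
\qquad
\Delta\Fv_n=\Fv_n(J)-\hf\bigl(\Fv_n(J_+)+\Fv_n(J_-)\bigr).
\]
The first step is to evaluate $\Delta\Fv_n$ charge by charge. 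Away from the loci where some $\Gamma_\ell(J)=0$, all three kernels coincide. At $J_\pm$ the vanishing $\Gamma_\ell(J)$ become $\pm\eps\,\Gamma_\ell(J_\perp)$ up to a positive factor, since $J_\pm=J\pm\eps\,\veps J$ up to sign conventions. So the average over $J_\pm$ replaces the sign factors of the vanishing $\Gamma_\ell$ by an antisymmetric average. In the first form of \eqref{kerg}, $\Fv_n(J)$ instead assigns $e_{|\cI|}$ to the subset $\cI$ of vanishing $\Gamma_\ell(J)$. Using Lemma \ref{lemma1} and \eqref{rel-delperp}, on the locus $\delta(\Gamma_\cI(J))$ the remaining $\Gamma_\ell(J_\pm)$ become $\Gamma_{\ell\perp\cI}$ evaluated with $J$ plus $\pm\eps$ times a $J_\perp$ piece. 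The difference $\Delta\Fv_n$ is therefore supported on charges with some $\Gamma_\ell(J)=0$.

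The key step is to use these deltas to restrict the charges. Within each block of consecutive constituents where all the relevant $\Gamma_\ell(J)$ vanish, the argument of \eqref{deltas-Thm1p} shows that all $\gamma_{ij}(J)=0$. This forces $J\cdot(q_i/N_i-q_j/N_j)=0$ inside the block, so the relative charges $q_i/N_i-q/N$ lie along $J_\perp$, which gives $q_i=J_\perp\qsf_i+N_i(\cdot)$. Re-organizing the sum over $q_i$ then separates the block total charges, which reassemble into lower-rank $h_{N',\mu,J}$ via the composition property \eqref{compTheta}, from the one-dimensional theta series \eqref{def-vthperp}. The constraint $J\cdot(\mu_i/N_i-\mu/N)\in\IZ$ supplies the Kronecker symbols in \eqref{def-Theta-perp}, and the residue classes become $\vrh\cdot\veps\cdot\mu$. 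Concretely, I would first write $H^S$ in terms of $h_{N',\mu,J}$ by inverting \eqref{genJhN} over Schr\"oder trees, exactly as for \eqref{holmod-whh}, with kernel $\sum_T(-1)^{n_T}\prod_v\Fv_v(J)$. I would then compose with $\Delta\Fv_n$ to obtain a kernel $\sum_T(-1)^{n_T-1}\Delta\Fv_{v_0}\prod_{v\ne v_0}\Fv_v(J)$, mirroring the proof of Theorem \ref{th-compl}.

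The combinatorial step is to show that this composed kernel collapses to the fully degenerate configurations, where all $\gamma_{ij}(J)=0$, so that the charges are purely along $J_\perp$. On those configurations it should equal \eqref{kerperp}. The $\sgn(\Bv_\ell)$ terms must cancel between $J$ and $J_\pm$, because $\Bv_\ell$ does not depend on $J$. The leftover weights should combine as in Lemma (\eqref{propSS}): iterating \eqref{Slin-plus-tree} turns nested sums of $e_{n_v-1}$ over Schr\"oder trees into $b_{n_s}$ via \eqref{expr-bn}, and the outer $\sgn$ factors then produce $S(\Gamma^\perp_{\{j_s\}})$ with the overall minus sign. This works because at $J_\pm$ the sign of $\Gamma_\ell(J_\pm)$ equals $\pm\sgn\Gamma^\perp_\ell$, with the $\pm$ absorbed by the sign of $c_1\cdot J_\perp$ conventions. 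The restriction to odd $n$ should follow from parity. The $J_\pm$ average is odd under $\qsf\to-\qsf$ in each sign, the $e_n$ vanish for odd $n$, and $b_n$ is nonzero only for odd $n$ (since $\tanh$ is odd), so only an even number of signs together with odd total $n$ survives.

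I expect the main obstacle to be this last combinatorial collapse. It requires the Schr\"oder-tree recombination with mixed depths, checking that the $\Bv$ terms cancel, and tracking which partially degenerate configurations survive, since partially degenerate blocks must be absorbed into the lower-rank $h_{N',\mu,J}$ evaluated at the wall itself. I would first verify $N=2$, where the result must vanish and reproduce \eqref{Delh2}, and then $N=3$ explicitly, before running the general induction on tree depth.
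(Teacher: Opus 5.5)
Your setup matches the paper's: $\Delta h_{N,\mu,J}=\Theta_{N,\mu}(\{\Delta\Fv_n\},\{H^S_{N',\mu}\})$, the localization of $\Delta\Fv_n$ on $\Gamma_\ell(J)=0$, and the lattice step (Kronecker symbols on residue classes, $q_i=J_\perp\qsf_i+\frac{N_i}{N}(J\cdot\mu)\vrh$, $\Gamma_\ell(J_+)=\eps|J|^2\Gamma^\perp_\ell$). Your composed kernel $\sum_T(-1)^{n_T-1}\Delta\Fv_{v_0}\prod_{v\ne v_0}\Fv_v(J)$ is also the right object. Write $\cF\circ\cG$ for the composition \eqref{compos-ker}, with $\cF$ the outer kernel. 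Then your kernel is $\Delta\Fv\circ\Fv^{-1}$, and by associativity it equals the paper's kernel \eqref{kerperpz2}.

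The gap is that the theorem \emph{is} the claim that this kernel equals $-\delta_n\sum S(\Gamma_{\{j_s\}}(J_+))\prod_s b_{n_s}$, and you leave exactly this as an anticipated ``combinatorial collapse''. The mechanism you sketch also fails. The $\sgn(\Bv_\ell)$ terms do not in general cancel between $J$ and $J_\pm$. $\Delta\Fv_n$ still contains them through the factor $\prod_{\ell\notin\cI}\bigl(\sgn(\Gamma_{\ell\perp\cI}(J))-\sgn(\Bv_\ell)\bigr)$ in \eqref{kerDelF}. For example, for $n=4$ with $\Gamma_1(J)=\Gamma_2(J)=0\neq\Gamma_3(J)$ one finds $\Delta\Fv_4=\bigl(e_2-\sgn\Gamma_1(J_+)\,\sgn\Gamma_2(J_+)\bigr)\bigl(\sgn\Gamma_3(J)-\sgn\Bv_3\bigr)$. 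These terms disappear only after composing with $\Fv^{-1}$. Its Schr\"oder-tree expansion mixes $\Bv$-signs, $\Gamma(J)$-signs and Kronecker symbols at every depth. You give no way to organize that cancellation tree by tree, unlike the pairing argument in the proof of Theorem \ref{th-compl}. Iterating \eqref{Slin-plus-tree} does not touch the $\Bv$-dependence at all. Relatedly, the degeneracy $\delta_n$ constrains the $n$ block totals coupled to $h_{N_i,\mu_i,J}$, which are the charges that end up on the $J_\perp$ line. The constituents inside each block are unconstrained and are what gets absorbed into $h_{N_i,\mu_i,J}$. Your description puts the degeneracy inside the blocks, which is the wrong way round.

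The missing idea is not to invert $\Fv$ directly. Instead, factor both kernels through the Kronecker-free kernel $\Fv^{(0)}$ of \eqref{kerg0}, which carries all the $\Bv$-dependence.
\begin{itemize}
\item By Lemma \ref{lemma1}, \eqref{kerDelF} reads $\Delta\Fv=\Fv^{\perp(0)}\circ\Fv^{(0)}$ with $\Fv^{\perp(0)}_m=\delta_m\bigl(e_{m-1}-S(\Gamma_{\Zv_{m-1}}(J_+))\bigr)$, cf.\ \eqref{kerDelF2}--\eqref{deltn}.
\item Likewise $\Fv=\mathcal{E}\circ\Fv^{(0)}$ with $\mathcal{E}_m=e_{m-1}\,\delta(\Gamma_{\Zv_{m-1}}(J))$, cf.\ \eqref{kerg2}.
\item The $b$-weights invert the $e$-weights by \eqref{ident-be}, i.e.\ $\tanh(\mathrm{Arctanh}\,x)=x$. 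Hence $\mathcal{B}\circ\mathcal{E}=1$ with $\mathcal{B}_m=b_m\,\delta(\Gamma_{\Zv_{m-1}}(J))$, so $\Fv^{(0)}=\mathcal{B}\circ\Fv$.
\item Therefore $\Delta\Fv=\Fv^\perp\circ\Fv$ with $\Fv^\perp=\Fv^{\perp(0)}\circ\mathcal{B}$, which is \eqref{kerperp2}.
\item Applying the recursion \eqref{b-eb} to the $m=1$ term converts this into \eqref{kerperpz2}.
\end{itemize}
This is the paper's Lemma \ref{lemma-ker}. In this form the $\Bv$-dependence is absorbed wholesale into $h_{N',\mu,J}$ rather than cancelled term by term. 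Without this identity, or an equivalent proof of your collapse, the proposal does not establish the kernel \eqref{kerperp}. Checks at $N=2,3$ cannot replace it.
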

\begin{proof}
From \eqref{genJhN}, it follows that the difference \eqref{defDh} is given by 
\be
\Delta h_{N,\mu,J}
=\Theta_{N,\mu}\Bigl(\{\Delta\Fv_n(J)\},\{H^{\IF_m}_{N',\mu}\}\Bigr),
\label{DelhN-int}
\ee 	
where 
\be
\Delta\Fv_n(J)=\Fv_n(J)-\hf\(\Fv_n(J_+)+\Fv_n(J_-)\).
\ee
Substituting \eqref{kerg} and taking into account that $\Gamma_\ell(J_\pm)\ne0$ and have the same sign if $\Gamma_\ell(J)\ne 0$
and $\sgn(\Gamma_\ell(J_+))=-\sgn(\Gamma_\ell(J_-))$ otherwise, 
one finds 
\be 
\Delta\Fv_n(J)=
\sum_{\cI\subseteq\Zv_{n-1}\atop |\cI|\in 2\IN}
\delta(\Gamma_\cI(J))\,\Bigl(e_{|\cI|}-S(\Gamma_\cI(J_+))\Bigr)
\prod_{\ell\in \Zv_{n-1}\setminus \cI}\Bigl(\sgn(\Gamma_{\ell\perp\cI}(J))-\sgn(\Bv_\ell)\Bigr).
\label{kerDelF}
\ee 
As in \S\ref{subsec-useful}, Lemma \ref{lemma1} ensures that, for each subset $\cI$, 
the second factor can be represented as a product of $m_\cI$ factors each 
depending only on the subset of charges $\bfhgam^{(s)}=\{\hgam_i\}_{i\in\cI_s}$,
while the first factor obviously depends only on the total charges $\hbbgam_s$ of these subsets.
Then applying the bijection \eqref{ident-sumset} between the sum over subsets and the sum over ordered partitions,
one can rewrite the kernel \eqref{kerDelF} as
\be 
\Delta\Fv_n(J)=\sum_{m=3}^n \sum_{\sum_{s=1}^{m} n_s=n}\Fv^{\perp(0)}_m(\{\hbbgam_s\};J)
\prod_{s=1}^m\Fv^{(0)}_{n_s}(\bfhgam^{(s)};J),
\label{kerDelF2}
\ee 
where
\bea 
\Fv^{\perp(0)}_n(\bfhgam;J)&=& \delta_n(\bfhgam;J)\,\Bigl(e_{n-1}-S(\Gamma_{\Zv_{n-1}}(J_+))\Bigr),
\label{kerperpz}
\\
\Fv^{(0)}_n(\bfhgam;J)
&=&
\prod_{\ell=1}^{n-1}\Bigl(\sgn(\Gamma_\ell(J))-\sgn(\Bv_\ell)\Bigr).
\label{kerg0}
\eea
and we introduced a shorthand notation
\be 
\delta_n(\bfhgam;J)=\delta^{(2)}_{n-1}\,\delta(\Gamma_{\Zv_{n-1}}(J)).
\label{deltn}
\ee 
Here the first Kronecker symbol ensures that $|\cI|$ is even in \eqref{kerDelF}.
The expression \eqref{kerDelF2} has exactly the same form as \eqref{compos-ker}. 
Therefore, we can apply the composition property \eqref{compTheta} in the opposite direction to get 
\be
\Delta h_{N,\mu,J}
=\Theta_{N,\mu}\Bigl(\{\Fv^{\perp(0)}_n(J)\},\{h^{(0)}_{N',\mu,J}\}\Bigr),
\label{DelhNresh0}
\ee 
where we introduced  
\be 
h^{(0)}_{N,\mu,J}=\Theta_{N,\mu}\Bigl(\{\Fv^{(0)}_n(J)\},\{H^S_{N',\mu}\}\Bigr).
\label{genJhN0}
\ee

The result \eqref{DelhNresh0} is already very close to the statement \eqref{DelhNresh} we want to prove.
The main difference between them is that the former is written in terms of the functions $h^{(0)}_{N,\mu,J}$, 
whereas the latter in terms of our standard generating functions $h_{N,\mu,J}$. 
In turn, these two differ by the presence or absence of the terms
involving Kronecker symbols in their respective kernels, \eqref{kerg} and \eqref{kerg0}.
Note that the missing terms are precisely the ones that become relevant at the walls
and were collected in the kernel \eqref{kerperpz}.
Therefore, to reproduce \eqref{DelhNresh}, we have to go one step back and reshuffle the expression \eqref{kerDelF2}
by moving some of the $\delta$-terms from $\Fv^{\perp(0)}_n$ to $\Fv^{(0)}_n$ before applying the composition property.
The required reshuffling is provided by the following 

\begin{lemma}
\label{lemma-ker}
\be 
\Delta\Fv_n(J)=\sum_{m=3}^n \sum_{\sum_{s=1}^{m} n_s=n}\Fv^\perp_m(\{\hbbgam_s\};J)
\prod_{s=1}^m\Fv_{n_s}(\bfhgam^{(s)};J),
\label{kerDelF3}
\ee 
where
\be 
\Fv^\perp_n(\bfhgam;J)= -\delta_n(\bfhgam;J)
\sum_{m=1}^{n}\sum_{\sum_{s=1}^{m}n_s=n}S(\Gamma_{\{j_s\}}(J_+))
\prod_{s=1}^{m} b_{n_s}.
\label{kerperpz2}
\ee 
\end{lemma}
\begin{proof}
We prove the statement by bringing \eqref{kerDelF3} to \eqref{kerDelF2}.
As the first step, note that, due to Lemma \ref{lemma1} and the bijection \eqref{ident-sumset}, 
the kernel $\Fv_n$ \eqref{kerg} can be rewritten as 
\be
\Fv_n(\bfhgam;J)=\sum_{m=1}^{n}\sum_{\sum_{s=1}^{m}n_s=n} e_{m-1} \delta(\Gamma_{\{j_s\}}(J))
\prod_{s=1}^m\Fv^{(0)}_{n_s}(\bfhgam^{(s)};J).
\label{kerg2}
\ee
Substituting this into \eqref{kerDelF3} leads to the representation of $\Delta\Fv_n(J)$
of the form \eqref{kerDelF2} where the kernel $\Fv^{\perp(0)}_n$ is replaced by\footnote{The Kronecker symbols 
	in $\Fv^\perp_m$ can, of course, be omitted in this formula because they are already taken into account by 
	the ones in front of the sum. The same comment applies to \eqref{kerperp2} and \eqref{deriv-kerFpz}.}
\be 
\delta_n(\bfhgam;J) \sum_{m=3}^n \sum_{\sum_{s=1}^{m} n_s=n}\Fv^\perp_m(\{\hbbgam_s\};J)
\prod_{s=1}^m e_{n_s-1},
\label{newker}
\ee 
where we took into account that $e_n$ is non-vanishing only for even $n$.
Next, we note that, applying the recursive relation \eqref{b-eb} to the $m=1$ term in \eqref{kerperpz2}, 
this kernel is nicely expressed through \eqref{kerperpz} as 
\be 
\Fv^\perp_n(\bfhgam;J)=\delta_n(\bfhgam;J)
\sum_{m=3}^{n}\sum_{\sum_{s=1}^{m}n_s=n}\Fv^{\perp(0)}_m(\{\hbbgam_s\};J)
\prod_{s=1}^{m} b_{n_s}.
\label{kerperp2}
\ee 
Substituting this into \eqref{newker}, one finds 
\be 
\delta_n(\bfhgam;J)
\sum_{m=3}^{n}\sum_{\sum_{s=1}^{m}n_s=n}\Fv^{\perp(0)}_m(\{\hbbgam_s\};J)
\prod_{s=1}^{m} c_{n_s},
\label{deriv-kerFpz}
\ee
where 
\be 
c_n=\sum_{m=1}^{n}b_m\sum_{\sum_{s=1}^{m}n_s=n}
\prod_{s=1}^{m} e_{n_s-1}=\delta_{n-1}
\ee 
and the second equality follows from the identity \eqref{ident-be}.
Hence, the kernel \eqref{deriv-kerFpz} reduces to $\Fv^{\perp(0)}_n(J)$.
This reproduces \eqref{kerDelF2} and completes the proof.
\end{proof}

As above, the result \eqref{kerDelF3} implies that 
\be
\Delta h_{N,\mu,J}
=\Theta_{N,\mu}\Bigl(\{\Fv^\perp_n(J)\},\{h_{N',\mu,J}\}\Bigr),
\label{DelhNresh2}
\ee 
where we set $\Fv^\perp_1(J)=0$.
Thus, to prove the theorem, it remains to analyze the theta series $\vth^{(\bfN)}_{\mu,\bfmu}(\Fv^\perp_n(J))$.
More precisely, we need to understand the effect of the $\delta$-factor \eqref{deltn} in the kernel \eqref{kerperpz2}.
It imposes the conditions $\bfv_\ell\ast\bfq=0$ for all $\ell=1,\dots, n-1$.
From \eqref{bfvVW} and the fact that $\bffv_\ell\star\boldsymbol{1}=0$, where $\boldsymbol{1}$ is the unit vector,
it follows that these equations are equivalent to the condition $J\cdot \bfq \sim \boldsymbol{1}$.
Taking into account that $\sum_{i=1}^n q_i=\mu$, one therefore obtains the constraints
\be 
\frac{J\cdot q_i}{N_i}=\frac{J\cdot \mu}{N}\, .
\label{eq-qJmu}
\ee 
They have a solution if and only if the residue classes satisfy 
\be 
J\cdot \(\frac{\mu_i}{N_i}-\frac{\mu}{N}\)\in\IZ,
\label{cond-mumu}
\ee 
which reproduces the product of Kronecker symbols in \eqref{def-Theta-perp}.
The full solution to \eqref{eq-qJmu} is found in appendix \ref{ap-lat}
and can be written explicitly using $J_\perp$, $\vrh$ and $\veps$ introduced above \eqref{prop-JJp}.
It is given in \eqref{sol-G0} and can be presented as
\be 
q_i=J_\perp\qsf_i+\frac{N_i}{N}\, (J\cdot\mu)\vrh,
\qquad 
\qsf_i\in N_i\IZ+\vrh\cdot\veps\cdot\mu_i,
\qquad 
\sum_{i=1}^n \qsf_i=\vrh\cdot\veps\cdot\mu.
\label{sol-G0main}
\ee 

Implementing these constraints in the theta series \eqref{Theta-bfN}, it is easy to see that the last term in \eqref{sol-G0main}
cancels everywhere. 
Furthermore, for the charges \eqref{sol-G0main} one has 
\be 
\Gamma_{\ell}(J_+)=\eps|J|^2\Gamma^\perp_\ell,
\ee 
where $\Gamma^\perp_\ell$ was defined in \eqref{defGamma-perp}, 
we denoted $|J|^2=(J^1)^2+(J^2)^2$ and used $J_+\cdot J_\perp=\eps|J|^2$.
As a result, the kernel $\Fv^\perp_n(J)$ \eqref{kerperpz2}, up to $\delta$-factors, reproduces the kernel $\Fp_n$ \eqref{kerperp},
and we arrive at the following relation
\be 
\vth^{(\bfN)}_{\mu,\bfmu}(\Fv^\perp_n(J))=\delta^{(2)}_{n-1}\,
\(\prod_{i=1}^n\delta^{(1)}_{J\cdot \bigl(\frac{\mu_i}{N_i}-\frac{\mu}{N}\bigr)}\)
\vth^{\perp(\bfN)}_{\vrh\cdot\veps\cdot\mu,\vrh\cdot\veps\cdot\bfmu}(\Fp_n),
\ee 
where $\vth^{\perp(\bfN)}_{\lambda,\bflam}$ was defined in \eqref{def-vthperp}.
Substituting this relation into \eqref{DelhNresh2}, it is immediate to see that $\Theta_{N,\mu}$ reduces to 
$\Theta^J_{N,\mu}$ defined in \eqref{def-Theta-perp} and evaluated on $\{\Fp_n\}$ and $\{h_{N',\mu,J}\}$.
This proves \eqref{DelhNresh} and hence the theorem.
\end{proof}

Theorem \ref{th-wall} represents the deviation from the arithmetic average, $\Delta h_{N,\mu,J}$, 
in terms of the generating functions $h_{N_i,\mu,J}$ with $N_i<N$.
Thus, it can be seen as an iterative relation between the generating functions at the walls.
An important observation is that it appears to be independent of the surface data in the same sense as the difference of 
the completions found in \S\ref{sec-varcompl}.
More precisely, to see this independence, one should properly generalize 
the theta series $\vth^{\perp(\bfN)}_{\lambda,\bflam}$ \eqref{def-vthperp}. 
Of course, the fact that it is built from one-dimensional 
lattices is a direct consequence of the fact that $\Lambda_{\IF_m}$ is two-dimensional.
But it is straightforward to extend it to more general cases.
For example, for del Pezzo surfaces, the result for $\Delta h_{N,\mu,J}$ is expected to have the same form \eqref{DelhNresh},
but with $\vth^{\perp(\bfN)}_{\lambda,\bflam}$ such that the lattice $\IZ$ in its definition 
is replaced by the $(b_2(S)-1)$-dimensional sublattice of $\Lambda_S$ orthogonal to $J$.

It is tempting to suggest that the formula \eqref{DelhNresh} might be an example of some universal equations expressing 
generating functions of topological invariants computed on the walls of marginal stability.\footnote{In this respect, 
	it is interesting to note that the kernel \eqref{kerperp} of the theta series coupling different generating functions 
	has a structure very similar to the kernel \eqref{refsolRn-simple-new} appearing in Theorem \ref{thm-simple-ref}.}
If such `wall-sitting' equations exist, it would be interesting to find them for other types of invariants 
whose definition can be extended to the walls.

\subsection{Example: $N=3$}
\label{subsec-n3}

Let us make the result given by Theorem \ref{th-wall} explicit in the simplest non-trivial case of rank $N=3$.
In this case there is an additional simplification 
\be
\Delta h_{3,\mu,J}=\Delta \hint_{3,\mu,J}.
\label{Delh3int}
\ee 
It holds because the term that is needed to convert to integer invariants is proportional to $h_1$ (see \eqref{def-hint})
and does not experience any wall-crossing.\footnote{This statement and hence the relation similar to \eqref{Delh3int} 
	hold for any rank $N$ given by a prime number.
And due to \eqref{Delh2}, it is also true for $N=4$.} 
Another simplification is that $\Theta^J_{3,\mu}$ \eqref{def-Theta-perp} consists of a single term with $n=3$ and hence all $N_i=1$.
As a result, we get
\be 
\Delta \hint_{3,\mu,J}=\frac14\,\delta^{(3)}_{J\cdot \mu}\, h_1^3 
\sum_{\qsf_i\in \IZ\atop \sum_{i=1}^n \qsf_i=\vrh\cdot\veps\cdot\mu}
\(\frac13\(1-\delta_{\Gamma^\perp_1}\delta_{\Gamma^\perp_2}\)-\sgn(\Gamma^\perp_1) \sgn(\Gamma^\perp_2))\)
\q^{-\frac{J^2}{2}\,Q_3(\bfhgam^\perp)}\, y^{(c_1\cdot J_\perp)\sum\limits_{i<j}\gamma^\perp_{ij}},
\label{Delhint3res1}
\ee 
where $h_1=\I/(\eta(\tau)\theta_1(\tau,2z))$.
Parametrizing 
\be 
\qsf_1=\frac13\,\lambda-k_1,
\qquad
\qsf_2=\frac13\,\lambda+k_1-k_2,
\qquad
\qsf_3=\frac13\,\lambda+k_2,
\ee 
where $\lambda=\vrh\cdot\veps\cdot\mu$,
various entries of the theta series simplify as
\be 
\Gamma^\perp_\ell=3k_\ell, 
\qquad
Q_3(\bfhgam^\perp)=-6(k_1^2+k_2^2-k_1 k_2),
\qquad 
\sum\limits_{i<j}\gamma^\perp_{ij}=2(k_1+k_2).
\ee
As a result, one can rewrite \eqref{Delhint3res1} as 
\be
\Delta \hint_{3,\mu,J}=
\frac14\,\delta^{(3)}_{J\cdot \mu}\,h_1^3\, \vartheta_{2,\vrh\cdot\veps\cdot\mu}\bigl(J^2\tau,2(c_1\cdot \veps\cdot J)z \bigr),
\label{Delh3res}
\ee
where we defined the following false theta series with the quadratic form given by the one of the $A_2$ lattice
\be 
\vartheta_{2,\lambda}(\tau,z)=
\sum_{k_\ell\in\IZ+\frac{\ell\lambda}{3}} 
\(\frac{1}{3}\(1-\delta_{k_1}\delta_{k_2}\)-\sgn(k_1)\sgn(k_2)\)
\q^{k_1^2+k_2^2-k_1 k_2}
y^{k_1+k_2}.
\label{def-vthperp3A2}
\ee

\acknowledgments
SA thanks Hülya Argüz, Pierrick Bousseau and Dominic Joyce for the warm hospitality and interesting discussions during his visit  
to the University of Oxford, which gave rise to this project.
We are also grateful to Jan Manschot, Caner Nazaroglu and Boris Pioline for valuable correspondence. 
The work of AC was funded by Ramanujan fellowship RJF/2023/000070, ANRF, India.

\appendix

\section{Generalized error functions}
\label{ap-generr}

In this appendix we review the definition and properties of the generalized error functions, which
are used to construct kernels of indefinite theta series and their modular completions. 
There are two types of such functions, $E_n$ and $M_n$, as well as their `boosted' versions $\Phi^E_n$ and $\Phi^M_n$.
The complementary generalized error functions $M_n$ (and $\Phi^M_n$) are discontinuous.
The main new result of the appendix is an extension of the definition of $M_n$ to the loci of discontinuity.
We propose two such extensions and prove several properties which they satisfy.

\subsection{Definition and properties}

The generalized error functions $E_n$ and their complementary counterpart $M_n$ have been introduced in 
\cite{Alexandrov:2016enp,Nazaroglu:2016lmr} (see also \cite{kudla2016theta,Funke_2019}).
They are defined by the following integral representations
\bea
E_n(\cM;\vu)&=& \int_{\IR^n} \de \vu' \, e^{-\pi(\vu-\vu')^2} \prod_{i=1}^n \sign(\cM^T \vu')_i\, ,
\label{generr-E}
\\
M_n(\cM;\vu)&=&\(\frac{\I}{\pi}\)^n |\det\cM|^{-1} \int\limits_{\IR^n-\I \vu}\de\vu'\,
\frac{e^{-\pi \vu'^2 -2\pi \I \vu'\cdot \vu}}{\prod_{i=1}^n(\cM^{-1}\vu')_i}\, ,
\label{modMn}
\eea
where $\vu=(u_1,\dots,u_n)$ is $n$-dimensional vector, $\cM$ is $n\times n$ invertible
matrix of parameters, and we use the standard Euclidean metric for the scalar product $\vu'\cdot \vu=\sum_{i=1}^n u'_i u_i$.
The two sets of functions generalize the ordinary error and complementary error functions because at $n=1$ 
one has $E_1(u)=\Erf(\sqrt{\pi}\, u)$ and $M_1(u)=-\sgn(u)\Erfc(\sqrt{\pi}\, u)$.
In this trivial case both functions do not depend on the argument $\cM$, 
and for generic $n$ the information carried by $\cM$ is also highly redundant. 
For example, for $n=2$ and 3, the functions depend only on 1 and 3 parameters, respectively.
We refer to \cite{Nazaroglu:2016lmr} for precise symmetry properties.

To present other important properties of the generalized error functions, we need to introduce a few notations.
Let $\vm_i=\cM_{\cdot\,i}$ be the $n$-dimensional vector representing the $i$-th column of matrix $\cM$.
We denote by $\tvm_i$ the set of dual vectors, i.e. the vectors satisfying $\tvm_i\cdot \vm_j=\delta_{ij}$. 
It is clear that $\cM^{-1}_{ij}=(\tvm_i)_j$.

Furthermore, given a subset $\cI\subseteq\Zv_n$, we define $V_\cI=\Span\{\vm_i\}_{i\in\cI}$
and a projector $\cB_\cI$ on this subspace. 
As in the main text, $\vm_{i\perp\cI}$ will denote the projection of $\vm_i$ orthogonal to $V_\cI$.
But we will also use $\tvm_{i\perp\cI}$, the projection of $\tvm_i$ orthogonal to
$\tV_\cI=\Span\{\tvm_i\}_{i\in\cI}$, and the projector $\tcB_\cI$ on $\tV_\cI$.
It is useful to note that the projection orthogonal to $\tV_{\Zv_n\setminus\cI}$ is equivalent to the projection on $V_\cI$.
Therefore, $\{\tvm_{i\perp\Zv_n\setminus\cI}\}_{i\in\cI}$ 
has the meaning of the set of vectors dual to the subset $\{\vm_i\}_{i\in\cI}$.
Finally, we define two $|\cI|\times|\cI|$ matrices
\be 
(\cM_\cI)_{ij}=\sum_{k=1}^n\cB_{\cI,ik}\cM_{kj},
\qquad
(\tcM_\cI)_{ij}=\sum_{k=1}^n\tcB_{\cI,ik}\cM_{kj},
\qquad
i,j\in\cI,
\label{subsetvar}
\ee
and two $|\cI|$-component vectors
\be  
\vu_\cI=\cB_\cI\vu,
\qquad
\tvu_\cI=\tcB_\cI\vu.
\label{subsetvar-vec}
\ee 
The matrix $\tcM_\cI$ can also be seen as the set of vectors $\{\vm_{i\perp\Zv_n\setminus\cI}\}_{i\in\cI}$,
while $\tvu_\cI$ is the projection of $\vu$ on their span.
It is obvious that $(\cM_\cI)_\cJ=\cM_\cJ$ for $\cJ\subset\cI$.

In terms of these notations, the generalized error functions satisfy the following properties \cite{Nazaroglu:2016lmr}:
\begin{itemize}
\item 
The functions $E_n(\cM;\vu)$ are smooth functions of $\vu$.	
However, their limit at large $\vu$ exhibit discontinuities. 
The precise property reads as \cite{Alexandrov:2024jnu} 
\be
\lim_{\lambda\to\infty}E_n(\cM;\lambda\vu)=\Ssf_{n}(\cM;\vu):=
\sum_{\cI\subseteq \Zv_{n}}
\esf_{|\cI|}(\cM_\cI)\,\delta_{\vu_\cI}
\prod_{j\in \Zv_{n}\setminus \cI} \sgn (\vm_j\cdot\vu),
\label{large-x-PhiE-EM}
\ee
where $\esf_n(\cM)=E_n(\cM;0)$.

\item 
In contrast, the functions $M_n(\cM;\vu)$ exhibit discontinuities where $\tvm_i\cdot \vu=0$.
At these loci, the poles of the integrand in \eqref{modMn} happen to be exactly at the integration contour,
and without an additional prescription the functions are not defined.
Away from these loci, $M_n(\cM;\vu)$ are exponentially suppressed at large values of $\vu$.

\item 
Both sets of functions are even/odd for even/odd $n$:
\be 
E_n(\cM;-\vu)=(-1)^n E_n(\cM;\vu),
\qquad
M_n(\cM;-\vu)=(-1)^n M_n(\cM;\vu).
\ee 
In particular, this implies that $e_n(\cM)=0$ for odd $n$. In contrast, $\cM_n$ is not defined at $\vu=0$.

\item 
If the set of $\vm_i$ can be split into two mutually orthogonal subsets, $\{\vm_i\}_{i\in\cI}$
and $\{\vm_j\}_{i\in\cJ}$ with $\cI\cup\cJ=\Zv_n$, then the generalized error functions factorize as
\be 
E_n(\cM;\vu)=E_{|\cI|}(\cM_\cI;\vu_\cI)\, E_{|\cJ|}(\cM_\cJ;\vu_\cJ)
\label{FactorEM}
\ee 
and similarly for $M_n$.

\item	
The two sets of generalized error functions can be expressed through each other by means of the following
equivalent relations 
\bea
E_n(\cM;\vu)&=&\sum_{\cI\subseteq \Zv_n}M_{|\cI|}(\cM_\cI;\vu_\cI)
	\prod_{j\in \Zv_n\setminus \cI}\sign(\vm_{j\perp\cI}\cdot\vu),
\label{expPhiE-mod-EM}
\\
M_n(\cM;\vu)&=& \sum_{\cI\subseteq \Zv_n}(-1)^{n-|\cI|}E_{|\cI|}(\cM_\cI;\vu_\cI)
	\prod_{j\in \Zv_n\setminus \cI}\sign(\tvm_j\cdot\vu) .
\label{expPhiE-mod-inv-EM}
\eea
The equivalence of the relations follows from the sign identity
\be
\sum_{\cI\,:\,\cL\subseteq\cI\subseteq \Zv_n}(-1)^{|\cI|-|\cL|}
\prod_{i\in \cI\setminus \cL}\sign(\tvm_{i\perp \Zv_n\setminus\cI}\cdot\vu)
\prod_{j\in \Zv_n\setminus \cI}\sign(\vm_{j\perp\cI}\cdot\vu)
=0,
\label{signident-EM}
\ee
where $\cL\subset\Zv_n$ is a fixed subset.
\end{itemize}

\subsection{Extension to the loci of discontinuity}
\label{ap-extension}

Although the complementary generalized error functions are not defined at the loci of their discontinuity,
one can ask whether their definition can be extended to these loci.
One way to do this is to provide a contour prescription for the integral representation \eqref{modMn}.
However, there is another natural way. Namely, we can {\it declare} that the relations \eqref{expPhiE-mod-EM} and 
\eqref{expPhiE-mod-inv-EM} continue to hold at the discontinuity loci 
(with the standard prescription for the sign function that $\sgn(0)=0$).\footnote{One can ask whether the two relations remain still equivalent.
This is indeed the case because the sign identity \eqref{signident-EM} turns out to hold for all $\vu$, as can be seen from 
its proof in \cite[\S A]{Nazaroglu:2016lmr}.}
Since $E_n$ are smooth and defined everywhere, this provides also an unambiguous definition for $M_n$. 
Then following the proof of \eqref{expPhiE-mod-inv-EM} in \cite[Prop.3.9]{Nazaroglu:2016lmr} backwards, it is easy to see that 
such a definition corresponds to the {\it principal value} prescription for the contour integral in \eqref{modMn}.
We will continue to represent the resulting functions by $M_n$.

However, note that, as the main text demonstrates, various constructions involve products of sign functions
supplemented by terms where some of the sign functions are replaced by Kronecker symbols weighted by certain coefficients
(see, e.g., \eqref{defSlin} or \eqref{kerg}).
This suggests that we can also allow for such additional terms in the relations between the generalized error functions.
They would affect the definition of $M_n$ only at the discontinuity loci.

Of course, this possibility gives rise to a huge ambiguity in the choice of terms to add.
It is natural to restrict to those structures that already appear in our context.
For instance, one possibility is to replace the sign product in \eqref{expPhiE-mod-EM} by 
$S(\{\vm_{j\perp\cI}\cdot\vu\}_{j\in \Zv_n\setminus \cI})$
where the function $S(\{x_i\})$ is defined in \eqref{defSlin}.
However, we find that the best choice is to use in this replacement, instead of $S(\{x_i\})$, the function $\Ssf_n(\cM;\vu)$
appearing in the large $\vu$ limit of $E_n$ \eqref{large-x-PhiE-EM}.
Thus, we suggest to introduce {\it new} functions $\Msf_n$ that satisfy 
\be
E_n(\cM;\vu)=\sum_{\cI\subseteq \Zv_n}\Msf_{|\cI|}(\cM_\cI;\vu_\cI)\,
\Ssf_{n-|\cI|}(\tcM_{\Zv_n\setminus\cI};\tvu_{\Zv_n\setminus\cI}).
\label{expPhiE-mod3-EM}
\ee
Comparing \eqref{expPhiE-mod3-EM} to \eqref{expPhiE-mod-EM}, one can read off the relation between 
the two extensions of the complementary generalized error functions
\be 
M_n(\cM;\vu)=\sum_{\cI\subseteq \Zv_n}
\esf_{|\Zv_n\setminus\cI|}(\tcM_{\Zv_n\setminus\cI})\,\delta_{\tvu_{\Zv_n\setminus\cI}}
\,\Msf_{|\cI|}(\cM_\cI;\vu_\cI),
\label{rel-hPhi-tPhi}
\ee
which explicitly shows that they differ only at the discontinuity loci.
Unfortunately, we were not able to find a simple contour prescription in the integral representation \eqref{modMn} which 
would reproduce the extension described by $\Msf_n$.

To justify the new definition \eqref{expPhiE-mod3-EM}, let us recall that $M_n$ are exponentially suppressed at large $\vu$.
While this property was proven only away from their discontinuity loci, it continues to hold also
at these loci for the extension denoted by $M_n$.
Indeed, applying the same limit as in \eqref{large-x-PhiE-EM} to the relation \eqref{expPhiE-mod-EM}, one finds 
that it is satisfied only if 
\be
\lim_{\lambda\to\infty}M_n(\cM;\lambda\vu)=
\esf_{n}(\cM)\,\delta_{\vu}.
\label{large-u-M}
\ee
Thus, for large $\vu$ the function does vanish, but it is non-vanishing at the origin for even $n$.
While this is not a problem for $M_n$, the functions that are really relevant for constructions of indefinite theta series are 
the boosted generalized error functions defined below in \S\ref{subsec-boost}.
Their second argument $\xbbm$ lives in $\IR^d$ carrying a quadratic form such that $\IR^n$ where $\vu$ lives is 
a positive definite subspace of $\IR^d$ and $\vu$ can be seen as a projection of $\xbbm$ to this subspace. 
As a result, for the boosted function $\Phi^M_n$ to be exponentially suppressed at large $\xbbm$,
not only $M_n$ should satisfy the same property for large $\vu$, but it also should vanish at $\vu=0$.
The definition \eqref{expPhiE-mod3-EM} precisely ensures this property.

For our purposes, it is important to find the inverse relations to \eqref{expPhiE-mod3-EM} and \eqref{rel-hPhi-tPhi}.
They are given by the following
\begin{proposition}\label{prop-MEM}
\begin{subequations}
\bea
\hspace{-1.4cm}
\Msf_n(\cM;\vu)&=&
\sum_{\cI\subseteq \Zv_{n}}
\bsf_{n-|\cI|+1}(\tcM_{\Zv_n\setminus\cI})\,\delta_{\tvu_{\Zv_n\setminus\cI}}\,
M_{|\cI|}(\cM_\cI;\vu_\cI)
\label{expPhiE-mod-inv3-EM}
\\
&=& \sum_{\cI\subseteq \Zv_{n}}E_{|\cI|}(\cM_\cI;\vu_\cI)
\!\!\sum_{\cJ\subseteq \Zv_n\setminus\cI}\!\!
(-1)^{n-|\cI|-|\cJ|}\bsf_{|\cJ|+1}(\tcM_\cJ)\,\delta_{\tvu_\cJ}
\!\!\!\!\!\prod_{j\in \Zv_n\setminus (\cI\cup\cJ)}\!\!\!\!\!
\sign(\tvm_j\cdot\vu),
\label{inv-tPhi-PhiE-gen-EM}
\eea
\end{subequations}
where the coefficients $\bsf_n(\cM)$ are given by\footnote{We added the shift $+1$ in the index 
	because these coefficients turn out to be closely related to the coefficients $b_n$ discussed in appendix \ref{ap-coef}.
	The relation is given by \eqref{bsym}.} 
\be  
\bsf_{n+1}(\cM)=\sum_{m=1}^n(-1)^m \sum_{\cup_{s=1}^m \cI_s =\Zv_n}\prod_{s=1}^m 
\esf_{|\cI_s|}\Bigl(\cM_{\cI_s\perp \cup_{\ell=1}^{s-1}\cI_\ell}\Bigr),
\label{def-bsf-EM}
\ee 
where the sum goes over ordered decompositions of $\Zv_n$ into non-empty disjoint subsets
and by $\cM_{\cI\perp\cJ}$ we mean the $|\cI|\times|\cI|$ matrix constructed from $\{\vu_{i\perp\cJ}\}_{i\in\cI}$.
\end{proposition}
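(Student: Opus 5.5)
The idea is to treat \eqref{rel-hPhi-tPhi} as a triangular (Möbius-type) linear system on the poset of subsets of $\Zv_n$ and invert it; \eqref{inv-tPhi-PhiE-gen-EM} then follows by substituting \eqref{expPhiE-mod-inv-EM}.

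\emph{Step 1: the ansatz.} Use the orthogonality structure: for $\cJ\subseteq\Zv_n\setminus\cI$, restricting $\cM$ to $\cI$ and then to a subset commutes with the projections, since $(\cM_\cI)_\cJ=\cM_\cJ$. Likewise the projection orthogonal to $\tV_{\Zv_n\setminus\cI}$, followed by orthogonalization inside, is compatible with the nested projections. Substituting an ansatz of the form \eqref{expPhiE-mod-inv3-EM} into \eqref{rel-hPhi-tPhi} gives a sum over chains $\cI\subseteq\cK\subseteq\Zv_n$ with coefficient
\be
\sum_{\cK}\esf_{|\Zv_n\setminus\cK|}(\tcM_{\Zv_n\setminus\cK})\,\delta_{\tvu_{\Zv_n\setminus\cK}}\,
\bsf_{|\cK\setminus\cI|+1}\bigl((\tcM_\cK)_{\cK\setminus\cI}\bigr)\,\delta_{(\tvu_\cK)_{\cK\setminus\cI}}\, M_{|\cI|}(\cM_\cI;\vu_\cI).
\ee
The product of the two Kronecker deltas equals $\delta_{\tvu_{\Zv_n\setminus\cI}}$. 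The reason is that the vanishing of the projection of $\vu$ on $\tV_{\Zv_n\setminus\cK}$, together with the vanishing of the projection on the complementary piece, is the same as the vanishing of the projection on $\tV_{\Zv_n\setminus\cI}$.

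\emph{Step 2: reduction to a coefficient identity.} The inversion therefore reduces to the purely algebraic identity
\be
\sum_{\cJ\subseteq \cL}\esf_{|\cL\setminus\cJ|}(\cdot)\,\bsf_{|\cJ|+1}(\cdot)=\delta_{\cL,\emptyset},
\qquad \cL=\Zv_n\setminus\cI,
\ee
with the matrix arguments given by the appropriately nested orthogonal projections, and with $\bsf_1=\esf_0=1$. I will prove this by recursion. Isolate the term with $\cJ=\cL$ and expand $\bsf$ via \eqref{def-bsf-EM}: an ordered decomposition $(\cI_1,\dots,\cI_m)$ of $\cL$ pairs with the $\esf$ factor of the complementary block, and the sign $(-1)^m$ makes adjacent lengths cancel telescopically. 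This is the standard fact that \eqref{def-bsf-EM} is the inverse of $1+\sum\esf$ in the algebra of ordered set partitions. The ordering of the projections $\cM_{\cI_s\perp\cup_{\ell<s}\cI_\ell}$ has to match the order in which the $\esf$ and $\bsf$ factors project; I will fix conventions so that the block already stripped off is the one projected out.

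\emph{Step 3: the second expression.} For \eqref{inv-tPhi-PhiE-gen-EM}, insert \eqref{expPhiE-mod-inv-EM} for each $M_{|\cI|}(\cM_\cI;\vu_\cI)$ and swap the order of summation. Two facts are needed. First, $\tvm$ dual vectors within $\cI$ correspond to $\tvm_{j\perp\Zv_n\setminus\cI}$. Second, on the support of $\delta_{\tvu_\cJ}$ one has $\sign(\tvm_{j\perp\cJ}\cdot\vu)=\sign(\tvm_j\cdot\vu)$, which is the analogue of \eqref{rel-delperp}. With these, the signs become unprojected and the stated double sum results.

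\emph{Main obstacle.} The hard part is the bookkeeping of the projected matrices in Step 2, that is, checking that $\esf(\tcM_{\Zv_n\setminus\cK})$ composed with $\bsf((\tcM_\cK)_{\cK\setminus\cI})$ yields exactly the matrices appearing in \eqref{def-bsf-EM}. I would first verify this for $n=2$, where $\bsf_3=-\esf_2$ should make the identity obvious, and then induct on $|\cL|$.
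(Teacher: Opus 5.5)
Your proposal is correct and follows essentially the same route as the paper. You substitute the ansatz into \eqref{rel-hPhi-tPhi} and reduce to the coefficient identity $\Delta_n=\delta_n$ (with $\bsf_1=1$). You then prove this identity by appending the $\esf$-block as the last block $\cI_{m+1}$ of an ordered decomposition, projected orthogonally to all the others as in $\tcM_\cJ=\cM_{\cI_{m+1}\perp\cup_{s\le m}\cI_s}$, so that the terms with a nonempty $\esf$-block sum to $-\bsf_{n+1}$. The second line then follows by inserting \eqref{expPhiE-mod-inv-EM}, exactly as you describe. Two points to tighten: the cancellation is a bijection between ordered decompositions rather than a telescoping, and there is no convention to fix, since \eqref{def-bsf-EM} forces the $\esf$-block to be the last one.
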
 	
\begin{proof}
Taking into account that $\delta_{\tvu_\cJ}=\prod_{i\in\cJ}\delta_{\tvm_i\cdot \vu}$, 
it is straightforward to check that the second equality is a direct consequence of the first one 
upon substitution of \eqref{expPhiE-mod-inv-EM}.
Therefore, it is sufficient to prove the relation between $\Msf_n$ and $M_n$.

To prove \eqref{expPhiE-mod-inv3-EM}, we substitute it into the r.h.s. of \eqref{rel-hPhi-tPhi}. 
The resulting expression can be rearranged as
\be 
\sum_{\cI\subseteq \Zv_n} \delta_{\tvu_{\Zv_n\setminus\cI}}\,
\Delta_{n-|\cI|}(\tcM_{\Zv_n\setminus\cI})\,M_{|\cI|}(\cM_\cI;\vu_\cI),
\label{expr-ebM}
\ee
where 
\be 
\Delta_{n}(\cM)=\sum_{\cJ\subseteq\Zv_n}
\esf_{|\cJ|}(\tcM_{\cJ})\,\bsf_{n-|\cJ|+1}(\cM_{\Zv_n\setminus\cJ}).
\label{DelM}
\ee 
It is easy to see that, upon substitution of \eqref{def-bsf-EM} into \eqref{DelM}, the result vanishes unless $n=0$.
Indeed, for all terms with $\cJ\ne\emptyset$, we can set $\cJ=\cI_{m+1}$ which, combined with subsets $\cI_s$, $s=1,\dots,m$,
gives rise to a decomposition of $\Zv_n$.
Then, since $\tcM_{\cJ}=\cM_{\cI_{m+1}\perp \cup_{s=1}^{m}\cI_s}$, the sum of such terms produces
$\bsf_{n+1}(\cM)$, which is canceled by the contribution from $\cJ=\emptyset$. Thus, $\Delta_{n}(\cM)=\delta_n$ and  
the expression \eqref{expr-ebM} reduces to $M_n(\cM;\vu)$, which completes the proof.
\end{proof}

\subsection{A shift}

In fact, we need a slight generalization of the complementary generalized error functions described above.
It takes its origin in the mismatch of the arguments in \eqref{Erefsim} and \eqref{defSlin}: 
while the former involves $\bfq+\beta\bftet$, the latter is evaluated at $\beta=0$ (see \eqref{Efref-lim}).
In other words, if we apply the relation \eqref{expPhiE-mod-EM} (or \eqref{expPhiE-mod3-EM})
between generalized error functions to \eqref{Erefsim}, 
we expect to get sign functions with $\beta$-dependent arguments $\bfv_\ell\ast(\bfq+\beta\bftet)$,
whereas in the sign functions appearing in \eqref{defSlin} the second term is absent.
This suggests that we need another relation which involves sign functions with a shifted argument.
Such a relation can again be obtained by modifying the functions $M_n$ (or $\Msf_n$).

The required modification has been introduced in \cite{Alexandrov:2019rth}
and, away from the discontinuity loci, it is defined as a generalization of \eqref{modMn} depending
on an additional vector of parameters, which appears in the shift of the integration contour,
\be 
\hM_n(\cM;\vu,\vb)=\(\frac{\I}{\pi}\)^n |\det\cM|^{-1} \int\limits_{\IR^n-\I (\vu-\vb)}\de\vu'\,
\frac{e^{-\pi \vu'^2 -2\pi \I \vu'\cdot \vu}}{\prod_{i=1}^n(\cM^{-1}\vu')_i}\, .
\label{mod-hMn}
\ee
Instead of \eqref{expPhiE-mod-EM}, it now satisfies 
\be 
E_n(\cM;\vu)=\sum_{\cI\subseteq \Zv_n}\hM_{|\cI|}(\cM_\cI;\vu_\cI,\vb_\cI)
\prod_{j\in \Zv_n\setminus \cI}\sign(\vm_{j\perp\cI}\cdot(\vu-\vb)),
\label{expPhiE-mod-EhM}
\ee
where $\vb_\cI=\cB_\cI\vb$. This relation can also be taken as a definition of $\hM_n$ extending it to the discontinuity loci.
Similarly, one can define the generalization of the second extension, which we call $\hMsf_n$, by imposing   
\be
E_n(\cM;\vu)=\sum_{\cI\subseteq \Zv_n}\hMsf_{|\cI|}(\cM_\cI;\vu_\cI)\,
\Ssf_{n-|\cI|}(\tcM_{\Zv_n\setminus\cI};(\tvu-\tvb)_{\Zv_n\setminus\cI}).
\label{expPhiE-mod3-hEM}
\ee
All other relations from the previous subsection have their own counterparts as well and are simply obtained by replacing $M_n$ by $\hM_n$,
$\Msf_n$ by $\hMsf_n$, and shifting $\vu$ in the arguments of sign functions and Kronecker symbols by $-\vb$.

\subsection{Boosted functions}
\label{subsec-boost}

As mentioned above, the main use of the generalized error functions is to construct kernels of indefinite theta series.
In such constructions, $\IR^n$, the space of the parameters $\vu$, appears as the negative or positive definite subspace, depending on conventions, 
of a larger space $\IR^d$ endowed with a quadratic form of indefinite signature.
In this situation the parameters of the generalized error functions are defined in terms of objects living in this larger space.
Due to this, it is convenient to introduce {\it boosted} generalized error functions (often, abusing terminology,  
the adjective `boosted' is omitted).
 
To define such functions, let $\{\vbbm_i\}$ be a set
of $n$ vectors of dimension $d$, 
and it is assumed that these vectors span a positive definite subspace in $\IR^d$ endowed with 
a bilinear form
$\ast$, i.e. $\vbbm_i\ast\vbbm_j$ is a positive definite matrix. We also introduce an orthonormal basis 
$\cB=\{\ebbm_i\}$ for this subspace and set
\be 
\cM_{ij}=\ebbm_i\ast\vbbm_j.
\qquad 
u_i=\ebbm_i\ast\xbbm,
\qquad 
b_i=\ebbm_i\ast\bbbm,
\label{M-u}
\ee 
where $\xbbm\in\IR^d$. Then the boosted versions of all generalized error functions considered above
are obtained by evaluating them with the arguments specified by \eqref{M-u}.
For example, we have
\be
\begin{split} 
\Phi_n^E(\{\vbbm_i\};\xbbm)=&\, E_n(\cM;\vu),
\\
\hPhi_n^M(\{\vbbm_i\};\xbbm,\bbbm)=&\,\hM_n(\cM;\vu,\vb),
\\
\hPhi_n^\Msf(\{\vbbm_i\};\xbbm,\bbbm)=&\,\hMsf_n(\cM;\vu,\vb).
\end{split} 
\label{generrPhiME}
\ee

Let us list the most important properties of these functions:
\begin{itemize}
	\item  
	All the boosted functions do not depend on the choice of the basis $\cB$. 
	
	\item 
	$\Phi_n^E$ are smooth functions of $\xbbm$, while $\hPhi_n^M$ and $\hPhi_n^\Msf$ have discontinuities across 
	$\tvbbm_i\ast(\xbbm-\bbbm)=0$, where $\tvbbm_i$ are the dual vectors to $\vbbm_i$. 
	The latter two functions differ only at these discontinuity loci.
	
	\item 
	If $\vbbm_i\ast\vbbm'_j=0$ for $i=1,\dots,n$ and $j=1,\dots,n'$, then 
	\be 
	\Phi_{n+n'}^E(\{\vbbm_i,\vbbm'_j\};\xbbm)=\Phi_n^E(\{\vbbm_i\};\xbbm)\,\Phi_{n'}^E(\{\vbbm'_j\};\xbbm),
	\label{Phi-orth}
	\ee 
	and similarly for other boosted functions. 
	
	\item 
	$\Phi_n^E$ solve the Vign\'eras equation  
	\be
	\[\p_\xbbm^2+2\pi (\xbbm\ast\p_\xbbm-\lambda)\]\Phi(\xbbm)=0
	\label{Vigdif}
	\ee
	with $\lambda=0$, which ensures that the theta series with kernels constructed from $\Phi_n^E$ are modular (or Jacobi) forms.
	Thus, these functions are the key elements to construct modular completions of indefinite theta series.
	
	\item 
	If one of the vectors is null, it reduces the rank of the generalized error function.
	Namely, for $\vbbm_\ell^2=0$, one has
	\be
	\Phi_n^E(\{\vbbm_i\};\xbbm)=\sgn (\vbbm_\ell\,\ast\xbbm)\,\Phi_{n-1}^E(\{\vbbm_i\}_{i\ne \ell};\xbbm).
	\label{Phinull}
	\ee
	In other words, for such vectors the completion is not required.
	
	\item 
	The relations between various types of boosted functions are obtained by substituting \eqref{M-u} into the relations 
	presented in the previous subsections.
	Here we provide only those that are used in the main text:
	\be
	\Phi_n^E(\{\vbbm_i\};\xbbm)=\sum_{\cI\subseteq \Zv_n}\hPhi_{|\cI|}^M(\vbbm_\cI;\xbbm,\bbbm)
	\prod_{i\in \Zv_n\setminus \cI}\sign(\vbbm_{i\perp \cI}\ast(\xbbm-\bbbm)),
	\label{expPhiE-mod}
	\ee
	\be
	\hPhi_n^\Msf(\{\vbbm_i\};\xbbm,\bbbm) =
	\sum_{\cI\subseteq \Zv_{n}}\biggl(\bsf_{n-|\cI|+1}(\vbbm_{(\Zv_n\setminus\cI)\perp\cI})
	\prod_{i\in \Zv_n\setminus \cI}  \delta_{\vbbm_{i\perp \cI}\ast(\xbbm-\bbbm)}\biggr)\,
	\hPhi^M_{|\cI|}(\vbbm_\cI;\xbbm,\bbbm),
	\label{expPhiE-mod-inv3}
	\ee
	where $\vbbm_\cI=\{\vbbm_i\}_{i\in\cI}$, $\vbbm_{\cI\perp\cJ}=\{\vbbm_{i\perp\cJ}\}_{i\in\cI}$, 
	and\footnote{We slightly abused the notation and denoted $\esf_n(\{\vbbm_i\})=\Phi^E_n(\{\vbbm_i\};0)$, 
		which is the same as $\esf_n(\cM)$ with $\cM$ from \eqref{M-u}.} 
	\be  
	\bsf_{n+1}(\{\vbbm_i\})=\sum_{m=1}^n(-1)^m \sum_{\cup_{s=1}^m \cI_s =\Zv_n}\prod_{s=1}^m 
	\esf_{|\cI_s|}\Bigl(\vbbm_{\cI_s\perp \cup_{\ell=1}^{s-1}\cI_\ell}\Bigr).
	\label{def-bsf}
	\ee 
\end{itemize}

\section{Numerical coefficients}
\label{ap-coef}

In the main text, we extensively use the two sets of coefficients, $e_n$ and $b_n$.
They are defined in \eqref{def-genthm} and \eqref{defbn} and coincide with the Fourier coefficients 
of $\mbox{Arctanh}(x)/x$ and $\tanh(x)$, respectively.
Since the two functions (up to $1/x$ factor) are inverse of each other, both sets of coefficients are mutually related.
The easiest way to derive such a relation is to substitute the Taylor expansions into the identity
$\tanh(\mbox{Arctanh}(x))=x$. Collecting terms with equal powers of $x$, one obtains 
\be
\sum_{m= 1}^{n}b_m\sum_{\sum_{s=1}^{m}n_s=n}
\prod_{s=1}^{m} e_{n_s-1}=\delta_{n-1}.
\label{ident-be}
\ee 
It is obvious that in the same way the inverse identity $\mbox{Arctanh}(\tanh(x))=x$ leads to a similar relation with the roles of 
$b_m$ and $e_{m-1}$ interchanged.
For $n>1$, this relation can also be written as a recursive relation for $b_n$'s
\be
b_n=-\sum_{m=3}^n\sum_{\sum_{s=1}^m n_s=n}e_{m-1} \prod_{s=1}^m b_{n_s}
=-\sum_{T\in \IT^{\rm S,\leq 2}_n} 
e_{n_{v_0}-1}\prod_{v\in V_T\setminus\{v_0\}}b_{n_v},
\label{b-eb}
\ee 
where in the second equality we used the equivalence between sums over ordered decompositions and Schr\"oder trees of depth at most 2
discussed in \S\ref{subsec-useful}. Applying it recursively, one arrives at an explicit expression of $b_n$ in terms of $e_n$'s:
\be
b_n=\sum_{T\in \IT_{n}^{\rm S}}(-1)^{n_T} \prod_{v\in V_T} e_{n_{v}-1} , 
\label{expr-bn}
\ee 
where now the sum goes over Schr\"oder trees without any restriction on their depth. 

For convenience, we provide here the first few non-vanishing coefficients for both sets:
\be
\begin{array}{rlrlrlrlrlrl}
	e_0&=1, &\quad e_2&=\frac13\, , &\quad e_4&=\frac15\, , &\quad  e_6&=\frac17\, , &\quad e_8&=\frac19\, , &\quad e_{10}&=\frac{1}{11}\, ,
	\\
	b_1&=1, &\quad b_3&=-\frac13\, , &\quad b_5&=\frac{2}{15}\, , &\quad  b_7&=-\frac{17}{315}\, , &\quad b_9&=\frac{62}{2835}\, ,
	&\quad b_{11}&=-\frac{1382}{155925}.
\end{array}
\ee 

\medskip 

Both sets of coefficients turn out to be related to more general coefficients appearing in appendix \ref{ap-generr}, 
$\esf_n(\{\vbbm_i\})$ and $\bsf_n(\{\vbbm_i\})$ defined in \eqref{def-bsf}.
The relation is described by the following equations
\begin{subequations}
	\bea 
	\cSym\esf_n(\{\bfv_i\})&=&e_n,
	\label{esym}
	\\
	\Sym\bsf_n(\{\bfv_i\})&=&b_n,
	\label{bsym}
	\eea
\label{rel-eb-ebsf}
\end{subequations}
where $\cSym$ denotes the sum over cyclic permutations of charges weighted by $1/n$.
The first relation was demonstrated in \cite[eq.(2.27)]{Alexandrov:2024jnu}.
To prove the second relation, we apply Lemma \ref{lemma1} to the definition \eqref{def-bsf}.
Then following the same reasoning as in \S\ref{subsec-useful}, one can rewrite it as\footnote{In fact, the relation between 
the ordered decompositions into subsets in \eqref{def-bsf} and the Schr\"oder trees is not one-to-one.
There are decompositions whose contributions differ only by sign and cancel each other. We have checked \eqref{def-bsf2} on many examples.} 
\be  
\bsf_n(\{\bfv_i\})
=\sum_{T\in \IT_{n}^{\rm S}}(-1)^{n_T} \prod_{v\in V_T} \esf_{n_{v}-1} (\bfv_{\Ch(v)}),
\label{def-bsf2}
\ee 
where $\bfv_{\Ch(v)}$ is the set of vectors $\bfv_\ell$ constructed from the charges of children of $v$.
Applying symmetrization to this expression, one can make it to act on each vertex separately. 
This allows to apply \eqref{esym}, which replaces all $\esf_{n_{v}-1}$ by $e_{n_{v}-1}$.
As a result, one obtains the same expression as on the r.h.s. of \eqref{expr-bn}
and hence it is equal to $b_n$.

\section{Solution of $\Gamma_\ell(J)=0$}
\label{ap-lat}

In this appendix we solve equations 
\be 
\Gamma_\ell(J)=0, 
\qquad 
\ell=1,\dots, n-1,
\label{eq-Gl}
\ee 
which are equivalent to 
\be 
J\cdot q_i=\frac{N_i}{N}\, J\cdot \mu,
\qquad
i=1,\dots, n.
\label{eq-Jqi}
\ee 

The cleanest way to solve these equations is to use the technique of lattice factorization (see, e.g., \cite{CSbook}).
The point is that the conditions \eqref{eq-Jqi} suggest to consider a decomposition of the lattice $\bfLam^{(\bfN)}_{\IF_m}$ \eqref{latVW}
into two orthogonal sublattices with vectors proportional to $J$ and $J_\perp=\veps\cdot J$
(see above \eqref{prop-JJp} for the definition of matrix $\veps$).
However, the sum of the two sublattices does {\it not} coincide with the full lattice.
It fails to account for the elements that are known as {\it glue vectors}. 

As a simple example, let us consider the decomposition of $\Lambda_{\IF_m}$ induced by $J$,
which we assume to have $\gcd\{J^\alpha\}=1$. In this case it is given by
\be 
\Lambda_{\IF_m}=\bigcup_{a=0}^{J^2-1}\Bigl(J\IZ\oplus J_\perp\IZ+\glu_a\Bigr),
\label{fact-JLambda}
\ee 
where $\glu_a\in\Lambda_{\IF_m}$ are the glue vectors which are labeled by the index $a=0,\dots, J^2-1$.
It is easy to see that the glue vectors can be chosen to be
\be 
\glu_a=a\vrh, 
\ee
where $\vrh\in\Lambda_{\IF_m}$ is any fixed vector such that $J\cdot \vrh=1$.

It is straightforward to generalize the decomposition \eqref{fact-JLambda} to the lattice $\bfLam^{(\bfN)}_{\IF_m}$
and to its dual. However, there is a small complication in our case due to the fact that $\bfq$ is not really an element
of the dual lattice, but differs from it by a shift by $\frac{\mu}{N} \boldsymbol{1}$.
Taking this into account, one can show that the most general charge vector can be represented as 
\be 
q_i=N_i\(J k^\parallel_i+J_\perp k^\perp_i +a_i\vrh\)+\mu_i\, ,
\label{genqi}
\ee 
where $k^\parallel_i,k^\perp_i \in\IZ$, $a_i\in\IZ_{J^2}$ and they satisfy the following constraints
\be
\sum_{i=1}^n  N_i k^\parallel_i =0 ,
\qquad 
\sum_{i=1}^n  N_i k^\perp_i =\vrh\cdot\veps\cdot\Delta\mu,
\qquad
\sum_{i=1}^n  a_i=J\cdot \Delta\mu,
\label{constr-sum}
\ee 
with $\Delta\mu=\mu-\sum_{i=1}^n\mu_i$.
As a crosscheck, one can verify 
\be 
\sum_{i=1}^nq_i=J_\perp(\vrh\cdot\veps\cdot\Delta\mu)+\vrh(J\cdot \Delta\mu)+\sum_{i=1}^n\mu_i=\mu,
\ee
where the second equality holds due to the identity
\be
\veps^{\alpha\beta}\veps_{\gamma\delta}=\delta^\alpha_\delta\delta^\beta_\gamma-\delta^\alpha_\gamma\delta^\beta_\delta.
\label{ident-veps}
\ee

Imposing \eqref{eq-Jqi}, one finds that
\be
a_i=J\cdot \(\frac{\mu}{N}-\frac{\mu_i}{N_i}\) \mod J^2,
\ee 
which satisfies the constraint in \eqref{constr-sum}.
Requiring the r.h.s. to be integer, we reproduce the constraints \eqref{cond-mumu} on the residue classes.
Substituting $a_i$ back into \eqref{genqi}, taking into account that, up to the diagonal term proportional to $\mu$, 
only the projection on $J_\perp$ is non-trivial, and using again \eqref{ident-veps}, one obtains
\be 
q_i=J_\perp\(N_ik^\perp_i+\vrh\cdot\veps\cdot\mu_i\)+\frac{N_i}{N}\, (J\cdot\mu)\vrh.
\label{sol-G0}
\ee 

\providecommand{\href}[2]{#2}\begingroup\raggedright\endgroup


\end{document}